\documentclass[12pt,letterpaper]{article}
\usepackage{amssymb,amsmath}
% added amssymb for \leqslant and \geqslant
\usepackage{mathtools}
\usepackage{amsthm}
\usepackage{bbm}
\usepackage{graphicx}
\usepackage{enumerate}
\usepackage{caption}
\usepackage{natbib}
\usepackage{url} % not crucial - can use for an URL 
\usepackage{bm}
\usepackage{color}
\usepackage{placeins} % to use FloatBarrier
\usepackage{silence} % To get rid of the following warnings:
\WarningFilter*{latex}{Text page \thepage\space contains only floats}

%\graphicspath{{figures/}} % comment out for arXiv
%\hyphenation{langwoord} % so this word won't be broken

%\addtolength{\oddsidemargin}{-.5cm}
%\addtolength{\evensidemargin}{-.5cm}
%\addtolength{\textwidth}{1cm}
%
% Now we want more on a page, fewer pages:
\addtolength{\oddsidemargin}{-1.4cm}
\addtolength{\evensidemargin}{-1.4cm}
\addtolength{\textwidth}{3cm}
\addtolength{\textheight}{3cm}
\addtolength{\topmargin}{-2cm}

\newtheorem{proposition}{Proposition}
\newtheorem{corollary}{Corollary}

\theoremstyle{definition}

\DeclareMathOperator*{\median}{\mbox{median}}

\DeclareMathOperator{\diag}{\mbox{diag}}
\DeclareMathOperator*{\argmin}{\mbox{argmin}}

\DeclareMathOperator{\tr}{\mbox{trace}}

\newcommand{\bzero}{\boldsymbol 0}
\newcommand{\bone}{\boldsymbol 1}

\newcommand{\btop}{\boldsymbol{\top}}

\newcommand{\br}{\boldsymbol r}
\newcommand{\bc}{\boldsymbol c}
\newcommand{\bv}{\boldsymbol v}
\newcommand{\bx}{\boldsymbol x}

\newcommand{\btx}{\boldsymbol{\widetilde{x}}}
\newcommand{\by}{\boldsymbol y}
\newcommand{\bty}{\boldsymbol{\widetilde{y}}}
\newcommand{\bz}{\boldsymbol z}

\newcommand{\bA}{\boldsymbol A}
\newcommand{\bB}{\boldsymbol B}
\newcommand{\bC}{\boldsymbol C}
\newcommand{\bD}{\boldsymbol D}
\newcommand{\bI}{\boldsymbol I}
\newcommand{\bP}{\boldsymbol P}
\newcommand{\bR}{\boldsymbol R}
\newcommand{\bS}{\boldsymbol S}
\newcommand{\bhS}{\boldsymbol{\widehat{S}}}
\newcommand{\bT}{\boldsymbol T}
\newcommand{\bhT}{\boldsymbol{\widehat{T}}}
\newcommand{\bU}{\boldsymbol U}
\newcommand{\bhU}{\boldsymbol{\widehat{U}}}
\newcommand{\bV}{\boldsymbol V}
\newcommand{\bhV}{\boldsymbol{\widehat{V}}}
\newcommand{\bW}{\boldsymbol W}
\newcommand{\bX}{\boldsymbol X}
\newcommand{\bhX}{\boldsymbol{\widehat{X}}}
\newcommand{\bY}{\boldsymbol Y}
\newcommand{\bZ}{\boldsymbol Z}

\newcommand{\btX}{\boldsymbol{\widetilde{X}}}
\newcommand{\btY}{\boldsymbol{\widetilde{Y}}}

\newcommand{\bbeta}{\boldsymbol \beta}
\newcommand{\bgamma}{\boldsymbol \gamma}
\newcommand{\bhgamma}{\boldsymbol{\widehat{\gamma}}}
\newcommand{\bGamma}{\boldsymbol \Gamma}
\newcommand{\bhGamma}{\boldsymbol{\widehat{\Gamma}}}
\newcommand{\bdelta}{\boldsymbol \delta}
\newcommand{\bDelta}{\boldsymbol \Delta}

\newcommand{\bTheta}{\boldsymbol \Theta}
\newcommand{\bhTheta}{\boldsymbol{\hat{\Theta}}}

\newcommand{\bhbeta}{\boldsymbol{\hat{\beta}}}

\newcommand{\bmu}{\boldsymbol \mu}
\newcommand{\hmu}{\hat{\mu}}
\newcommand{\bhmu}{\boldsymbol{\hat{\mu}}}
\newcommand{\bSigma}{\boldsymbol \Sigma}
\newcommand{\bhSigma}{\boldsymbol{\widehat{\Sigma}}}

\newcommand{\halpha}{\hat{\alpha}}
\newcommand{\hbeta}{\hat{\beta}}

\newcommand{\hsigma}{\hat{\sigma}}

%\definecolor{blue}{RGB}{0,0,255}
%\definecolor{red}{RGB}{255,0,0}
%\providecommand{\blue}[1]{\textcolor{blue}{#1}}
%\providecommand{\red}[1]{\textcolor{red}{#1}}
		
\begin{document}

\def\spacingset#1{\renewcommand{\baselinestretch}
{#1}\small\normalsize} \spacingset{1}

%%%%%%%%%%%%%%%%%%%%%%%%%%%%%%%%%%%%%%%%%%%%%%%%%%%%%%

\title{\bf Challenges of cellwise outliers}	

\author{Jakob Raymaekers\\
  {\normalsize Department of Quantitative 
	Economics, Maastricht University, 
	The Netherlands}\\ \\
	Peter J. Rousseeuw\\
	{\normalsize Section of Statistics and
	Data Science,  
	University of Leuven, Belgium}\\ \\}
\date{February 4, 2023}
\maketitle

\begin{abstract}
It is well-known that real data often 
contain outliers. The term outlier 
typically refers to a case, that is, 
a row of the $n \times d$ data matrix.
In recent times a different type has 
come into focus, the cellwise outliers.
These are suspicious cells (entries) 
that can occur anywhere in the data
matrix. 
Even a relatively small proportion of 
outlying cells can contaminate over half 
the rows, which is a problem for rowwise 
robust methods. 
This article discusses the challenges
posed by cellwise outliers, and some
methods developed so far to deal with them.
New results are obtained on cellwise 
breakdown values for location, covariance 
and regression. 
A cellwise robust method is proposed 
for correspondence analysis, with real 
data illustrations.
The paper concludes by formulating some 
points for debate.
\end{abstract}

\vspace{0.5cm} 

\noindent
{\it Keywords:} Anomaly detection,
Breakdown value,
Correspondence Analysis,
Multivariate statistics, 
Robustness.

\spacingset{1.1} % was 1.45}

%%%%%%%%%%%%%%%%%%%%%%%%%%%%%%%%%%%%%%%%%%%
\section{Introduction} 
\label{sec:intro}

Real-world data often contains elements
that do not follow the pattern 
suggested by the majority of the data. 
Such outliers may be gross errors with
the potential to severely distort a
statistical analysis, but they may also
be valuable pieces of information that 
warrant further inspection.
Therefore we want the ability to detect 
outliers, no matter what caused them.

The approach of robust statistics, 
as formalized by e.g. \cite{Huber1964} and 
\cite{hampel1986robust} has been to first
obtain a robust fit, i.e. a model that fits 
the majority of the data well, and then to
look for cases that deviate from that fit. 
An important assumption underlying this 
approach is that of casewise contamination,
in which a case is either an outlier or 
free of any contamination. 
This is also called rowwise contamination,
because cases are typically encoded as
rows of the data matrix.
Different formalizations of casewise
contamination exist, but the most common 
one is to assume that the observed data 
was generated from 
a clean distribution $F$ with probability 
$1-\varepsilon > 0.5$ and from an arbitrary 
distribution $H$ with probability 
$\varepsilon < 0.5$. 
In other words, we observe $\bX$ defined as
\begin{equation}\label{eq:thcm}
\bX = (1 - B)\bY + B\bZ
\end{equation}
where $B \sim \mbox{Bernoulli}(\varepsilon)$
takes on the values 0 or 1,  $\bY \sim F$,
$\bZ \sim H$, and $B$ is independent of
$\bY$ and $\bZ$. 
This is known as the {\it Tukey-Huber 
contamination model} (THCM).
A commonly used equivalent formulation is 
$\bX \sim F_{\varepsilon} \coloneqq 
 (1-\varepsilon)F + \varepsilon H$. 
The general goal is to estimate the 
characteristics of $F$ given that we only
observe $F_{\varepsilon}$ and, importantly, 
we don't assume anything about $H$.

With this model in mind, a wide variety of 
robust statistical methods has been 
developed for settings including the
estimation of location and covariance, 
fitting regression models, carrying out
principal component analysis and 
discriminant analysis, analyzing 
time series and so on, see for instance
the books by \cite{hampel1986robust},
\cite{RL1987}, and \cite{maronna2019robust}. 

Under the THCM model it is assumed that 
a case is either coming from an arbitrary
distribution $H$ that may have no relation
to $F$, or is a perfect draw from $F$.
Therefore, methods developed under this
model tend to either trust all coordinates
of a case, or downweight all of them
simultaneously.
But there are also situations where one
or more coordinates (entries, cells) of
a case are suspect whereas the others
are fine.
By downweighting such a case entirely
we may lose valuable information residing
in its uncontaminated cells. 

This explains why nowadays also another
contamination model is being considered,
that of cellwise outliers.
It was first published by
\cite{alqallaf2009}.
They assumed that we observe a random 
variable $\bX$ given by 
\begin{equation*}
\bX = (\bI - \bB) \bY + \bB\bZ
\end{equation*}
where $\bB$ is a diagonal matrix, 
whose diagonal entries can only take 
the values zero or one.
This allows some components of the 
observed random vector $\bX$ to be 
contaminated, whereas others are clean. 
The model thus assumes that 
the data were initially generated by 
the intended distribution $F$, but 
afterward contaminated by replacing 
some cells by other values. 
In a given row it may happen that no
cells are replaced, or a few, or even
all of them.
Figure~\ref{fig:cellrow} illustrates the 
difference between the casewise (rowwise)
model in the left panel and the cellwise 
model in the right panel. 

\begin{figure}[!ht]
\centering
\includegraphics[width=0.7\columnwidth]
   {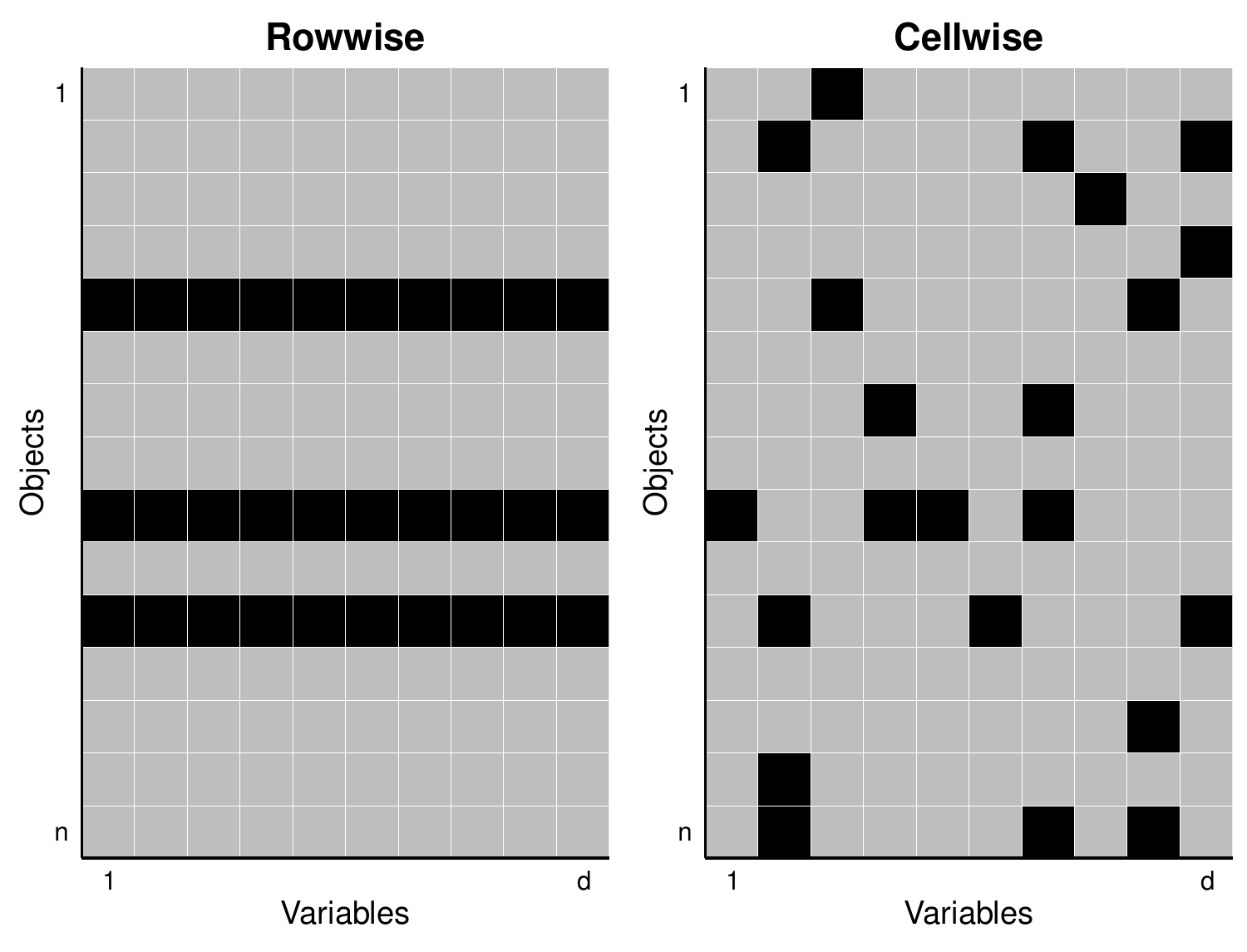}
\caption{Rowwise outlier model (left) 
  versus cellwise outlier model (right). 
  Black means outlying. A small fraction 
  of cellwise outliers can contaminate 
  many rows.}
\label{fig:cellrow}
\end{figure}

Cellwise outliers embody a paradigm shift 
from rowwise outliers. Arguably 
the most drastic change is the fact that a 
small percentage of contaminated cells can 
contaminate a large fraction of the rows.
For instance, assume that cells are
contaminated with probability $\varepsilon$
and that this happens for all cells 
independently. In that situation the 
probability that a case (row) is 
contaminated in at least one of its cells
equals
\begin{equation}\label{eq:probcont}
  P[\mbox{row is contaminated}] = 
	1 - (1-\varepsilon)^d
\end{equation}
which grows very quickly with the dimension 
$d$. For example, in $d = 15$ dimensions
a fraction $\varepsilon = 0.05$ of 
contaminated cells suffices to have on 
average over $50\%$ of contaminated rows, 
and then the methods developed for the 
THCM model are no longer reliable.

Long before the notion of cellwise 
outliers appeared in print in
\cite{alqallaf2009}, it was 
discussed informally by some. 
One of us remembers clearly that the 
topic was brought up in the early 
eighties in seminar talks by Alfio 
Marazzi and Werner Stahel at ETH 
Z\"urich. It was presented as an open
problem, including 
formula~\eqref{eq:probcont}.
The general feeling then was that the 
tools available at that time were 
insufficient to address this challenge.

The remainder of our paper is organized 
as follows. In section~\ref{sec:meth} we 
give an overview of the methodology that 
has been developed for dealing with 
cellwise outliers, in various settings.
In section~\ref{sec:bdv} we prove some
new results on cellwise breakdown values.
Next, section~\ref{sec:CA} proposes the 
first cellwise robust method for 
correspondence analysis with two real
data illustrations.
Section~\ref{sec:discussion} then 
evaluates the current state of  
research on cellwise robustness and 
raises some points for discussion.
Section~\ref{sec:conc} concludes.

%%%%%%%%%%%%%%%%%%%%%%%%%%%%%%%%%%%%%%%%
\section{Methods for dealing with 
         cellwise outliers}
\label{sec:meth}

In this section we give an overview of 
methodology developed for dealing 
with cellwise outliers. 
We first treat ``point clouds'', i.e. 
single-class multivariate numerical data. 
Then we discuss other models including 
regression, classification, principal 
component analysis, and clustering. 

%%%%%%%%%%%%%%%%%%%%%%%%%%%%%%%%%%%%%%%%
\subsection{Detecting cellwise outliers}

Detecting cellwise outliers turns out to 
be a difficult task. 
The most straightforward way of trying 
to detect cellwise outliers is to look 
for outliers in each variable separately. 
A simple approach is to robustly 
standardize each variable yielding a 
robust z-score, and then to compare these 
z-scores to quantiles of a reference 
distribution (e.g. the standard Gaussian) 
to decide whether or not to flag a cell.
A more involved version of detecting
marginal outliers is to use a univariate 
filter such as the one proposed by 
\cite{gervini2002class}. Either way, 
to allow for a wider range of potentially 
asymmetric distributions one could first 
robustly transform the variables to 
central normality as in  
\cite{raymaekers2021transforming}.

While flagging marginal outliers is 
fast, only rather extreme cellwise 
outliers will be identified reliably,
since relations between the variables 
are not taken into account.
For instance, consider the bivariate
setting in 
Figure~\ref{fig:bivariateoutliers}. 
Case 2 has a marginally outlying
value $x_{2,2}$\,, whereas case 3 has a 
marginally outlying $x_{3,1}$\,. 
But for case 4 things are not so clear. 
The marginal approach does not see 
anything wrong with its $x_{4,1}$ or 
$x_{4,2}$ even though it is clearly 
an outlier relative to the pattern of 
the large majority of the datapoints,
for which $X_1$ and $X_2$ are 
negatively related.
This illustrates that flagging
marginal outliers is insufficient
in general.

\begin{figure}[!ht]
\centering
\includegraphics[width=0.5\columnwidth]
  {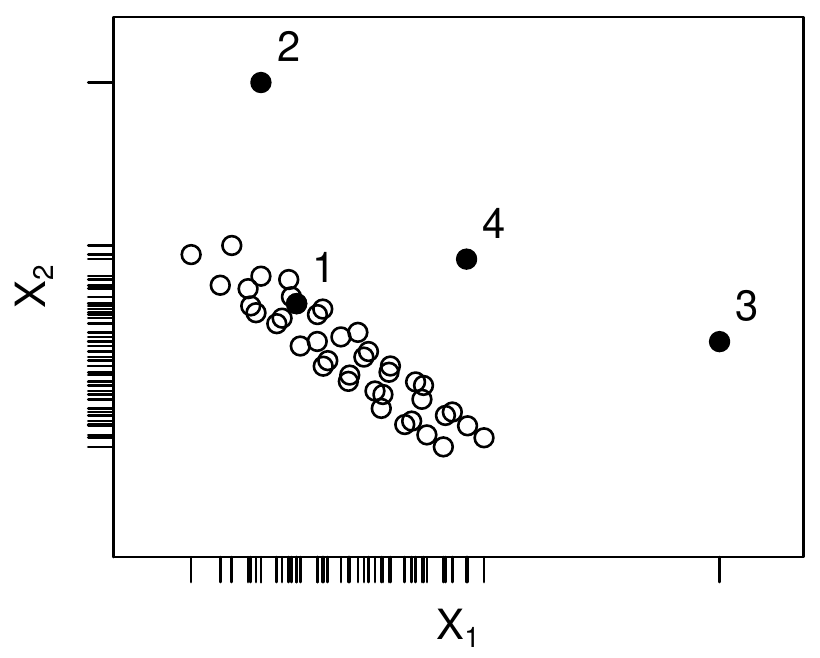}
\caption{Cellwise outliers need not be 
         marginally outlying.}
\label{fig:bivariateoutliers}
\end{figure}

Moreover, based on 
Figure~\ref{fig:bivariateoutliers}
alone it is undecidable whether case 4
has a cellwise outlier in $X_1$ or 
$X_2$ or maybe both.
But note that it might be possible to
decide whether $x_{4, 1}$ or $x_{4,2}$ 
is a cellwise outlier if we had more 
variables in the data that were 
correlated with $X_1$ and $X_2$\,. 
This suggests that in this particular
situation the curse of dimensionality 
may be something of a blessing. 

This idea motivates approaches for 
detecting cellwise outliers that 
consider more than just the marginal 
distributions. One direction is to 
extend the univariate GY filter of 
\cite{gervini2002class}. 
\cite{Leung2017} extended it to a 
bivariate filter in the context of 
estimating location and covariance. 
More recently, 
\cite{saraceno2021robust2} went to 
more than two dimensions. 

Another approach is the Detect 
Deviating Cells (\textbf{DDC}) 
algorithm of \cite{Rousseeuw2018}. 
This procedure uses robust simple 
linear regressions of each variable 
on other variables in the data.
Combining this information yields
predicted values for each cell. 
The difference between the original
dataset and its prediction 
provides a residual for each cell.
One can then flag cellwise outliers 
by large standardized cellwise
residuals.
This approach has received some 
attention, and in fact the most 
successful versions of the filter
approach incorporate \textbf{DDC} in 
their filter.
\cite{walach2020cellwise} used a 
similar approach for metabolomics
data by aggregating outlyingness over 
pairwise log-ratios.

More recently, some approaches were
developed that try to go beyond 
combining bivariate relations by
incorporating all relations in the data.
\cite{cellHandler} consider
each case in turn and look for a
vector parameter that captures the shift 
needed to bring the case into the fold. 
Through $L^1$ regularization they 
enforce sparsity on this shift, so that
only a minority of the cells of the case
are actually shifted. This algorithm is 
called \textbf{cellHandler}, and 
was shown to be effective in identifying 
complex adversarial cellwise outliers.
The inputs of \textbf{cellHandler} 
are the data matrix as well as one of 
the robust estimates of its covariance 
matrix described in the next subsection.

\begin{table}[ht]
\centering
\vspace{2mm}
\begin{tabular}{rcccc}
\hline
d & Univariate GY & Bivariate GY & 
  DDC & cellHandler\\ 
\hline
5 & 0.0007 & 0.0027 & 0.0005 & 0.0076 \\ 
10 & 0.0012 & 0.0087 & 0.0012 & 0.0186 \\ 
20 & 0.0037 & 0.0304 & 0.0033 & 0.0674 \\ 
50 & 0.0059 & 0.1672 & 0.0150 & 0.5594 \\ 
\hline
\end{tabular}
\caption{Flagging cellwise outliers: 
  computation times in seconds for a 
  dataset with $n = 1000$ and varying 
  dimension $d$.}
\label{tab:flagging}
\end{table}

Table~\ref{tab:flagging} compares the
computation times of several methods for
detecting cellwise outliers. The sample
size was always $n=1000$ and the dimension 
$d$ ranged from 5 to 50. The data were 
generated from a Gaussian distribution
with mean zero and covariance matrix 
$\Sigma_{jk} = (-0.9)^{|j-k|}$ in which
10\% of the cells were replaced by the
value 5. All computations were carried
out in \textsf{R}, and the reported times 
are averages over 10 replications.
As expected the univariate filter is the
fastest. The bivariate GY filter takes
more time, also compared to \textbf{DDC}. 
The \textbf{cellHandler} method is the 
slowest in relative terms but still fast.

As there are several approaches for 
detecting cellwise outliers, some guidance
is useful to select one of them in a
given situation. When there are reasons
to assume that all cellwise outliers
are marginally outlying, a univariate 
filter is sufficient. When they 
are expected to stand out in bivariate 
plots one can e.g. use the bivariate
GY filter~\citep{Leung2017}.
However, for less computational effort 
one might as well run \textbf{DDC}, 
which combines 
information from many variables and 
provides predicted values for the
flagged cells. Also \textbf{cellHandler}
does that. In order to choose between
the latter pair, it is good to know that
\textbf{cellHandler} requires a robust 
covariance estimate which assumes $n > p$, 
and also assumes a roughly \mbox{elliptical}
data shape. \textbf{DDC} does not make
these assumptions and is faster, but when 
it is applicable \textbf{cellHandler}
is more accurate for detecting complex
outliers.

A somewhat related approach is the 
\textbf{SPADIMO} algorithm by 
\cite{Debruyne2019}. It is not purely
cellwise because it starts by identifying 
outlying cases by large robust distances
from a casewise robust location and
covariance matrix. Next, it tries to 
identify the cells responsible for the 
outlyingness of those cases.
For this it uses regularized regression 
to identify sparse directions of maximal 
outlyingness.

%%%%%%%%%%%%%%%%%%%%%%%%%%%%%%%%%%%%%%%%%%%%%%%
\subsection{Estimating location and covariance}

If there is any reassuring news 
in the study of cellwise outliers, 
it is probably that multivariate 
location can still be estimated 
reliably from a robustness 
viewpoint by combining robust 
location estimates on the marginals. 
The typical example is the 
coordinatewise median, given by
\begin{equation}\label{eq:comed}
   \bhmu(\bX) = 
   (\median_{i=1}^n x_{i1},\ldots,
   \median_{i=1}^n x_{id})
\end{equation}
for a dataset $\bX$ with $n$ rows and
$d$ columns.
While such a componentwise estimate of 
location may not always have the most 
desirable properties, it is at least 
consistent under general conditions and 
very robust. 
The same is true for the estimation of 
the scales of the variables, in other 
words, the diagonal of a covariance 
matrix.

However, when we leave the comfort zone 
of estimating multivariate location and 
componentwise scale, things become 
much more challenging. 
For cellwise robust estimation of a 
covariance matrix, \cite{VanAelst2011}
proposed the cellwise Stahel-Donoho 
estimator. 
This is an extension of the casewise 
Stahel-Donoho covariance estimator 
\citep{stahel1981robuste,
donoho1982breakdown}. 
It was seen to perform reasonably 
well under casewise contamination,
but in the setting of cellwise
contamination it is now outperformed
by more recent methods, as illustrated
in Figure 1 of \cite{Agostinelli2015}
and Figure 9 of \cite{Rousseeuw2018}. 

\cite{danilov2010robust} thoroughly 
studied the approach of filtering the 
data before estimating a covariance 
matrix. 
He considered the three-step procedure of 
(i) detecting the contaminated cells, 
(ii) processing them in some way to 
reduce their influence, and (iii) 
estimating the location and the 
covariance matrix. 
For the detection, he looked at marginal 
approaches based on robust z-scores, as 
well as multivariate approaches based on 
partial Mahalanobis distances. 
He ended up proposing the latter with the 
caveat that the computation time scales 
exponentially with the dimension $d$ 
if one wants to identify cellwise 
outliers in all possible subspaces. 
For processing the outlying cells he 
compared winsorizing, censoring, and 
setting them to missing. 
The final proposal used a multivariate 
cell detection method and then set the 
outlying cells to missing, followed by 
the MLE for the resulting incomplete 
dataset. Continuing in this 
direction, the two-step 
generalized S-estimator (2SGS) was 
proposed by \cite{Agostinelli2015} 
and extended by \cite{Leung2017}.
This was the best performer for a number 
of years. 
The 2SGS method combines a bivariate 
filter with the generalized S-estimator 
for missing data \citep{danilov2012}, 
and is implemented in the 
\textsf{R}-package \texttt{GSE}  
\citep{Leung2019}.

Recently there have been new attempts 
at estimating the covariance matrix 
under adversarial cellwise outliers. 
\cite{cellHandler} proposed 
an iterative approach of (i) running
\textbf{cellHandler} to flag outlying cells, 
(ii) setting these cells to missing,
and (iii) applying the EM algorithm 
to re-estimate the covariance matrix. 
While this approach performs quite well, 
it is purely algorithmic in nature 
without a tractable objective function,
and therefore its properties are 
difficult to analyze. 
However, building on these ideas  
\cite{cellMCD} proposed 
a cellwise robust version of the 
well-known casewise robust minimum 
covariance determinant (MCD) estimator 
of \cite{rousseeuw1984least}. 
The new cellMCD method does optimize an 
objective function and was proven to 
possess a high breakdown value 
under cellwise contamination. 
Its algorithm iterates a kind of
C-steps, which (like those for the
casewise MCD) always lower the
objective function. Therefore the
algorithm is guaranteed to converge
to a solution, which may be a local 
minimum.
The initial estimator uses a combination 
of the robust correlation estimator of
\cite{raymaekers2021fast} and \textbf{DDC},
but several initial estimates could
be used.
The methods described in this paragraph 
are implemented in the \textsf{R} package 
\texttt{cellWise} \citep{R:cellWise}.

\begin{table}[ht]
\centering
\vspace{2mm}
\begin{tabular}{rccccc}
\hline
d & Classical & MCD & S & 2SGS & cellMCD \\ 
\hline
5 & 0.0000 & 0.0423 & 0.1448 & 0.5730 & 0.2403 \\ 
10 & 0.0002 & 0.0950 & 0.3145 & 1.6476 & 0.3733 \\ 
20 & 0.0002 & 0.3065 & 1.0548 & 15.206 & 1.3153 \\ 
50 & 0.0015 & 2.0049 & 8.0532 & 213.02 & 24.770 \\ 
\hline
\end{tabular}
\caption{Estimation of location and 
   covariance: computation times in seconds 
   for a dataset with $n = 1000$ and 
   varying dimension $d$.}
\label{tab:cov}
\end{table}

Table~\ref{tab:cov} lists the 
computation times of the classical mean 
and covariance matrix, as well as those 
of the casewise robust MCD and 
S-estimators. These are compared to the
cellwise robust methods 2SGS and cellMCD.
The setup was the same as for 
Table~\ref{tab:flagging}, with all
methods run in \textsf{R} in their default 
version. We note that the cellwise 
robust methods take the most time, but
are still feasible. The 2SGS method was
the slowest, partly due to its default 
initial estimator that could be 
replaced by a different one.

\cite{cwMLE} recently proposed a 
method to statistically analyze data 
with cellwise weights, and implemented 
it for the estimation of location and 
covariance. 
One possible application is to first
assign weights to cells depending on
how outlying they are (e.g. based on
a standardized residual), and then
to compute a reweighted location and
covariance matrix by cellwise weighted
maximum likelihood. 

%%%%%%%%%%%%%%%%%%%%%%%%%%%%%%%%%%%%%%%%%%
\subsection{Estimating a precision matrix}
 
In many applications it is the inverse 
of the covariance matrix, also known as 
the precision matrix, that is of main 
interest. 
In low dimensions, it is of course 
possible to estimate a covariance matrix 
and then to invert the estimate, but 
this is no longer the case in 
high dimensions because the estimated
covariance matrix is often singular. 
Rather than regularizing the covariance 
matrix, one can then estimate the 
precision matrix and regularize it
at the same time. 
This induces zeroes in the precision 
matrix, which correspond to partial 
correlations of zero which in turn 
correspond to conditional independence 
in the Gaussian graphical model. 
One of the most popular estimators of
a high-dimensional precision matrix 
is the graphical lasso of 
\cite{friedman2008sparse}. 
It requires the input of an estimate 
of the covariance matrix $\bhSigma$, 
usually the classical empirical 
covariance. 
The precision matrix $\bTheta$ is then 
estimated as
\begin{equation} \label{eq:GLASSO}  
  \bhTheta  = 
  \argmin_{\bTheta \succcurlyeq 0} 
	\left(\tr\left(\bhSigma \bTheta\right) 
	- \log(\det(\bTheta)) + 
	\lambda \sum_{j\neq \ell} 
    |\bTheta_{j\ell}|\right)\;.
\end{equation}
A natural question is whether a robust 
version of $\bhSigma$ would result in a 
robust precision matrix estimator, and 
this is indeed the case. 
This idea has been studied 
empirically and theoretically by 
\cite{ollerer2015robust}, 
\cite{croux2016robust}, 
\cite{tarr2016}, \cite{loh2018high}, 
and \cite{katayama2018robust}, using 
different $\bhSigma$ based on robust 
pairwise correlations.
The $\bhSigma$ in these studies 
are typically of the form 
\begin{equation}
 \bhSigma_{j\ell} = \mbox{s}\!\left(
  \bX_{.j}\right)\mbox{s}\!\left(
	\bX_{.\ell}\right) \mbox{r}\!\left(
	\bX_{.j}, \bX_{.\ell}\right)
\end{equation}
where $\mbox{s}(\cdot)$ denotes a robust 
scale estimator, often the $Q_n$ estimator 
of \cite{rousseeuw1993alternatives}, and 
$\mbox{r}(\cdot)$ denotes a robust 
correlation estimator such as Spearman's 
rank correlation. 
Note that $\bhSigma$ needs to be positive 
semidefinite for~\eqref{eq:GLASSO} to 
work. When $\bhSigma$ is not positive 
semidefinite it is first modified, e.g.
by adding a small multiple of the 
identity matrix. 
The resulting precision matrix 
estimators perform well under 
non-adversarial cellwise contamination, 
are fast to compute, and can be analyzed
theoretically due to the well-defined 
objective function~\eqref{eq:GLASSO}. 
But since they focus on pairwise 
correlations, their performance is 
likely to suffer under more adversarial 
cellwise outliers. 

%%%%%%%%%%%%%%%%%%%%%%%%%%%%%%
\subsection{Regression}

Regression in the presence of cellwise 
outliers has mainly been studied in the 
linear model with numerical 
predictors and response. More precisely, 
we model the observed pairs 
$(\bx_1, y_1), \ldots, (\bx_n, y_n)$ 
where $\bx_i$ is a $p \times 1$ vector 
for $i = 1, \ldots, n$ as
\begin{equation}\label{eq:reg}
  y_i = \alpha + \bbeta \bx_i 
	      + \mbox{noise}_i
\end{equation}
where the regression coefficients $\alpha$ 
and $\bbeta=(\beta_1,\ldots,\beta_p)$ 
need to be estimated.

In casewise robust linear regression, 
an observed pair $(\bx_i, y_i)$ is 
considered to be either outlying or 
entirely clean. 
For cellwise robust regression the setup 
is different.
It remains true that the response $y_i$
is either outlying or clean because it 
is univariate, so it has only one cell.
The main difference lies in the 
contamination of the predictor variables. 
Instead of working with either outlying
or entirely clean 
$\bx_i = (x_{i1},\ldots,x_{ip})$, we now 
allow for some of the cells $x_{ij}$ to 
be contaminated while the others are 
clean. 
We would like to use the clean cells of
$\bx_i$ in the estimation of $\alpha$ 
and $\bbeta$, but at the same time limit 
the harmful influence of the bad cells
of $\bx_i$\,.

One of the earliest proposals for 
cellwise robust regression is the 
shooting S-estimator 
of \cite{ollerer2016shooting}. 
It uses the coordinate descent 
algorithm \citep{bezdek1987}, also 
called ``shooting algorithm''. 
If we fit only the $j$-th slope 
coefficient $\beta_j$ while keeping 
the other slopes fixed at their
previous values, we can do
\begin{equation} \label{eq:shooting}
  y_i^{(j)} \sim \;
  \halpha_j^{\mbox{\tiny{new}}} + 
  \hbeta_j^{\mbox{\tiny{new}}} x_{ij} 
  \;\;\;\;\;\mbox{ where }\;\;\;\;\;
  y_i^{(j)} \coloneqq y_i - 
  \bhbeta_{-j}^{\mbox{\tiny{old}}} 
  \bx_{i,-j}\;\;.
\end{equation}
This can be seen as a simple regression 
of a new response $\by^{(j)}$ on the 
$j$-th predictor. 
The shooting S-estimator makes two 
changes to the coordinate descent
algorithm. 
The first is that in the simple 
regression~\eqref{eq:shooting} the OLS 
regression is replaced by a casewise
robust S-estimator. 
The second is that the response 
$\by^{(j)}$ is `robustified' in each 
iteration, to make sure the cellwise 
outliers don't propagate to the new 
responses in the iteration steps. 
\cite{bottmer2022sparse} recently 
proposed a version of shooting S
regression for high dimensions which
uses hard thresholding, yielding 
sparse estimates of the regression 
coefficients.

An alternative proposal was made by 
\cite{leung2016robust}. Their model 
assumes joint multivariate normality
of the pairs $(\bx_i, y_i)$ before
the cells are contaminated.
They start by estimating the location 
$\bhmu$ and covariance matrix $\bhSigma$ 
of the joint distribution of the
pairs $(\bx_i, y_i)$ by the 2SGS 
estimator of \cite{Agostinelli2015},
with a modified filter in the first step. 
Then the slope coefficients and the
intercept are estimated from $\bhmu$
and $\bhSigma$ by the usual formulas
\begin{equation} \label{eq:beta}
 \bhbeta := \bhSigma_{xx}^{-1}
 \bhSigma_{xy}\;\;\;
 \mbox{ followed by }\;\;\;
 \halpha = \hmu_y - 
 \bhmu_x^{{\btop}}\bhbeta \;.
\end{equation}
The robustness of this estimator stems
from the 2SGS method, which is known to 
perform quite well in many situations. 
On the other hand, this approach 
requires the covariance matrix to be 
invertible so it is not applicable to 
higher-dimensional settings that the 
shooting S version of
\cite{bottmer2022sparse} can handle.

\begin{table}[ht]
\centering
\vspace{2mm}
\begin{tabular}{rccccc}
\hline
p & Classical & LTS & S & Shooting S \\ 
\hline
5 & 0.0008 & 0.0442 & 0.0440 & 0.5525 \\ 
10 & 0.0011 & 0.0720 & 0.0947 & 1.1577 \\ 
20 & 0.0018 & 0.1558 & 0.3650 & 3.2820 \\ 
50 & 0.0034 & 0.8009 & 1.7363 & 12.417 \\ 
\hline
\end{tabular}
\caption{Linear regression: 
   computation times in seconds 
   for a dataset with $n = 1000$ and 
   $p$ regressors.}
\label{tab:reg}
\end{table}

Table~\ref{tab:reg} shows the
computation times of several regression
methods, using the same setup as in
Tables~\ref{tab:flagging} 
and~\ref{tab:cov}.
The classical least squares regression
is compared to the casewise robust
Least Trimmed Squares regression 
\citep{rousseeuw1984least} and S 
regression \citep{RY1984}, and 
to the cellwise robust shooting S 
method of \cite{ollerer2016shooting}. 
The latter is the slowest but still
quite feasible, even though its
computation is in pure \texttt{R}.
The proposal of \cite{leung2016robust}
is not shown because its computation 
time is that of the 2SGS method in
Table~\ref{tab:cov}, the additional
computation being very quick.

An interesting approach was introduced 
by \cite{agostinelli2016composite} for 
the more general problem of robust 
estimators for linear mixed models. 
They propose a 
composite estimator which minimizes 
the scales of Mahalanobis distances 
on pairs of variables. 
The use of pairwise Mahalanobis 
distances makes this approach more 
robust against cellwise outliers than 
casewise S-estimators. 
It would be interesting to evaluate the 
robustness of this approach against 
adversarial cellwise outliers.

The proposal of 
\cite{filzmoser2020cellwise} starts
from a casewise robust method such 
as MM-regression. Then \textbf{SPADIMO} 
\citep{Debruyne2019} is 
applied to the flagged cases in the
hope of identifying the cells 
responsible for their outlyingness.
Next, these flagged cells are imputed 
by nearest neighbors applied to the 
unflagged cells, and the imputed data 
is used to update the residuals of the 
regression. 
Finally, casewise weights are obtained 
from the standardized residuals, and
used in a weighted least squares 
regression to obtain updated regression 
coefficients. 
This three-step procedure is iterated.  
As the algorithm relies on a casewise 
robust initial estimator, it cannot be 
cellwise robust in a general sense. 
However, it is possible that cellwise 
robustness could be obtained by 
replacing the initial estimator.

Finally, we mention some recent related 
proposals. 
\cite{su2021robust} first compute a 
robust covariance matrix of the joint
distribution of the pairs $(\bx_i, y_i)$.
For this they use a method based on
pairwise correlations as mentioned above,
using the approach of  
\cite{gnanadesikan1972robust} or the  
Gaussian rank correlation studied by 
\cite{boudt2012gaussian}.
They then plug the square root of this 
covariance matrix into the adaptive 
lasso objective to obtain variable
selection.
\cite{toka2021robust} propose a 
three-step procedure consisting of (i) 
identifying marginal cellwise outliers 
using robust z-scores and setting them
to missing, (ii) running the casewise 
robust MCD estimates of location and
covariance on the complete cases, and 
then imputing the NA's as in the E-step 
of the EM algorithm, and (iii) applying 
a robust lasso regression to the 
imputed data. 
\cite{saraceno2021robust} carry out
robust seemingly unrelated regressions 
using several univariate MM-regressions, 
and apply the 2SGS method of 
\cite{Agostinelli2015} to the residual
matrix.
For regression with compositional 
covariates, \cite{vstefelova2021robust} 
propose a four-step procedure: 
(i) detecting outlying cells by \textbf{DDC}, 
(ii) imputing the outlying cells that 
are not in outlying rows, 
(iii) running rowwise robust regression
on the imputed data, and 
(iv) performing inference based on multiple 
imputation.

We end the section with a brief discussion.
While there have been several approaches 
to cellwise robust regression, there is 
still much room for further research. 
First, not much attention has been devoted
to outlier detection so far. 
For casewise robustness, there is a taxonomy 
of outlier types based on whether $\bx_i$
or the residual of $y_i$ is outlying, or 
both, see \cite{rousseeuwZomeren1990}. 
The detection of such casewise outliers 
depends crucially on having a reliable 
robust fit $(\halpha,\bhbeta)$. 
But in many of the existing proposals, 
cellwise outliers are detected in a kind of 
pre-processing step, and the robust fit 
is not yet used to detect them. 
Second, with few exceptions, most of the
proposals do not yet have provable 
robustness guarantees, such as a high 
breakdown value under cellwise 
contamination. 
This may be due to the lack of a natural 
framework such as those for casewise 
robust regression, or it may be due to 
the general complexity of dealing with
cellwise outliers.

%%%%%%%%%%%%%%%%%%%%%%%%%%%%%%%%%%%%%%%%%
\subsection{Principal component analysis}

Principal component analysis (PCA) is an 
essential tool of multivariate statistics. 
Several formulations of classical PCA
exist, e.g. based on the spectral 
decomposition of the covariance matrix
or the singular value decomposition of
the centered data matrix.
One way of looking at it is that PCA
approximates the dataset $\bX$ 
with $d$ columns by a new dataset 
$\bhX$ that lies in a $q$-dimensional
affine subspace of dimension $q < d$ 
such that the Frobenius norm 
$||\bX - \bhX||_F$ is minimized. 
Several proposals exist for casewise 
robust PCA, including spherical
PCA \citep{locantore1999robust}, PCA
based on least trimmed squares orthogonal
regression \citep{maronna2005principal}, 
the PCAgrid projection pursuit method 
\citep{croux2007}, and the hybrid 
ROBPCA method \citep{hubert2005robpca}. 
There are also approaches for casewise 
robust sparse PCA, see 
\cite{croux2013sparsePCA}, 
\cite{HubertROSPCA},
and \cite{wang2020sparse}.

For cellwise outliers, things are more 
difficult. One complication stems from 
the fact that classical PCA projects the 
data orthogonally on a lower dimensional
subspace. Projecting a case (row) with 
cellwise outliers is problematic, as the 
outlying cells can propagate over all
cells. Therefore, whenever the algorithm
makes a projection, the outlying cells
should be addressed.

An early proposal for the related 
problem of cellwise robust factor
analysis was made by \cite{croux2003RAR}.
The first cellwise robust PCA method
was proposed by \cite{maronna2008robust}. 
Instead of minimizing the Frobenius norm 
of $||\bX - \bhX||_F$ they minimize a 
robust approximation error by applying 
a bounded function $\rho$ to each cell 
of $\bX - \bhX$ and summing the 
resulting errors. The minimization 
is carried out by iteratively solving 
weighted least squares problems.
The bounded $\rho$ function guarantees
that a single cell 
cannot dominate the approximation error.
A different proposal for cellwise robust 
PCA was made by \cite{she2016robust}.
The MacroPCA method proposed by
\cite{hubert2019macropca} aims to be an
all-in-one method capable of handling
missing values as well as being robust
to cellwise and rowwise outliers. 
It starts from \textbf{DDC} and 
incorporates elements of ROBPCA. 
It imputes
cells that caused data points to lie 
far from the fitted subspace, and
produces cellwise residuals that can
be plotted.

\begin{table}[ht]
\centering
\vspace{2mm}
\begin{tabular}{rccccc}
\hline
d & Classical & ROBPCA  & PCAgrid & MacroPCA\\ 
\hline
10 & 0.0043 & 0.1091 & 0.2216 & 0.1551 \\ 
50 & 0.0161 & 0.2414 & 1.8707 & 0.4163 \\ 
100 & 0.0413 & 0.3641 & 4.0087 & 0.9460 \\ 
500 & 1.0227 & 2.9865 & 22.727 & 12.142 \\ 
1000 & 3.3084 & 8.6023 & 56.335 & 13.778 \\ 
\hline
\end{tabular}
\caption{PCA: computation times in 
seconds for a dataset with $n = 1000$ 
and dimension $d$.}
\label{tab:pca}
\end{table}

The computation times of several 
of these PCA methods are shown in 
Table~\ref{tab:pca}. The setup was the
same as in Tables \ref{tab:flagging}
to~\ref{tab:reg}, but going up to
higher dimensions $d$. The task was to 
compute the first 10 components.
The casewise robust ROBPCA and PCAgrid 
methods take more time than classical 
PCA but are still fast enough. The 
MacroPCA method is slower than ROBPCA 
but faster than PCAgrid, illustrating
that cellwise robust PCA is 
computationally affordable.

%%%%%%%%%%%%%%%%%%%%%%%%%%%%%
\subsection{Clustering}

Several methods for clustering data with 
cellwise outliers have been proposed. 
The earliest was \cite{Farcomeni2014}, 
who used a mixture model with an 
additional $n \times d$ parameter matrix
of zeroes and ones, indicating which 
cells should be flagged and set to 
missing. 
The goal was to maximize the observed 
likelihood of the unflagged cells. 
The method was implemented in the 
\textsf{R} package \texttt{snipEM} 
\citep{Farcomeni2019} which is no longer 
maintained. 
\cite{garcia2021cluster} consider a
model in which clusters are linear,
i.e. they lie near lower-dimensional
subspaces. Each cluster is then fitted
by a cellwise robust PCA in the style 
of \cite{maronna2008robust} discussed 
above. They minimize the robust scales
of the coordinatewise residuals as in
least trimmed squares regression. 
The algorithm iterates the following
three steps: (i) updating the subspace 
parameters by robust PCA, (ii) updating 
the weights indicating which cells are
deemed outlying, and (iii) updating 
the group memberships, similar to the
usual k-means clustering algorithm.

%%%%%%%%%%%%%%%%%%%%%%%%%%%%%%%%%%%
\subsection{Time series}\label{sec:TS}

We are not aware of existing work 
on applying cellwise robust methods to
time series, but we will describe a
setting in which they prove useful, 
the fitting of autoregressive models. 
Suppose we have a univariate time 
series $y_t$ for $t=1,\ldots,n$.
The autoregressive model AR($p$) of 
order $p$ then assumes that
\begin{equation*}
  y_t = \beta_1 y_{t-1} +
    \ldots + \beta_p y_{t-p} + 
    \varepsilon_t
\end{equation*}
for $t=p+1,\ldots,n$ where the
$\varepsilon_t$ are i.i.d. according
to a Gaussian distribution with mean
zero and standard deviation $\sigma$.
For simplicity we assume that the 
model has no intercept term.
Note that there are $n-p$ complete
sets $(y_t, y_{t-1},\ldots,y_{t-p})$
which can be combined as successive 
rows in an 
$(n-p) \times (p+1)$ matrix $\bZ$.
The AR($p$) model can thus be seen as 
a linear model with the first column
of $\bZ$ as response and the other
columns as regressors, and various
classical methods exist to estimate
the parameter vector $\bbeta =
(\beta_1,\ldots,\beta_p)$ from
$\bZ$. The simplest of these is just
the LS regression estimator,
sometimes called `naive OLS' in this 
context.

Now consider an outlying 
value $y_t$\,.
In the AR($p$) model it will occur
$p+1$ times: once as response, 
and once in each of the $p$ lags.
In other words, one outlying $y_t$
affects $p+1$ rows of $\bZ$, each time
in a single cell.
Since the breakdown value of any
affine equivariant regression method
is at most $0.5$, using such a method 
in this setting has a breakdown value 
of at most $1/(2p+2)$ which can be 
unpleasantly low. (And if the original
series is first differenced, e.g. to
remove a seasonal effect, this upper
bound is cut in half again.)
The odds against robust estimation of 
$\bbeta$ are thus stacked.
For a discussion of this issue and a
review of the literature on types of
outliers in AR($p$) models see 
Section 7.2 of \cite{RL1987}.

This is a setting where 
cellwise robust methods can shine.
As an illustration, we generated an
AR(3) time series with parameters
$\bbeta = (0.5,\,0.2,\,0.2)$,
$\sigma = 1$ and length $n=1000$. 
We then created a kind of
`day of the week effect' by replacing
every 7th entry $y_t$ by the outlying
value 10.
(The \textsf{R} script generating the 
data and the subsequent analysis is
available from the webpage
\url{https://wis.kuleuven.be/statdatascience/robust/papers-2020-2029}).
The original time series
thus has 14\% of outliers, but they 
contaminate 569 rows of $\bZ$ out of 
997, so 57\% of them.
The classical estimate of $\bbeta$
becomes $(0.11,0.11,0.08)$ which is
far from $(0.5,\,0.2,\,0.2)$, and 
the estimate of the error standard
deviation $\sigma=1$ is 3.79\,.
Casewise robust regression does not
work here either, with LTS yielding
$\bhbeta = (0.73, 0.02, 0.01)$ and
$\hsigma = 1.85$, and the default
MM regression in \textsf{R} giving
$\bhbeta = (0.14, 0.14, 0.12)$ and
$\hsigma = 1.72$\,.
Fortunately, cellwise robust methods
do give good results.
The regression method 
of \cite{leung2016robust}
first computes a robust covariance
matrix $\bhSigma$ of $\bZ$ by 2SGS
and then applies~\eqref{eq:beta}, 
yielding
$\bhbeta = (0.52, 0.21, 0.17)$.
Moreover, from 
$\hsigma^2 = \bhSigma_{yy} - 
\bhbeta^{\btop} \bhSigma_{xx}
\bhbeta$ one obtains $\hsigma = 0.96$.
Applying the same formulas to the
cellMCD covariance matrix gives
similar results, with 
$\bhbeta = (0.53, 0.18, 0.19)$ and
$\hsigma = 0.92$\,.

To conclude, the autoregressive 
model can by its nature create many
outlying cells, which handicap
casewise robust methods whereas
cellwise robust methods can deal
with them.
 
%%%%%%%%%%%%%%%%%%%%%%%%%%%%%%%%%%%
\subsection{Various other settings}

\cite{aerts2017cellwise} construct a
cellwise robust version of linear 
and quadratic discriminant analysis.
To this end they estimate a cellwise
robust precision matrix for each class
as in \cite{croux2016robust}, and
plug these into the assignment rule.
Further work could deal with the 
situation that new data points to be
classified may also contain
outlying cells.

In functional data analysis, a case
is not a row of a matrix but a function 
such as a curve. 
In that setting, cellwise outliers
correspond to local deviations (like
spikes) in a curve, rather than global
outlyingness of the entire curve.
In the taxonomy of \cite{Hubert2015SMA}
of functional outliers these are
called {\it isolated outliers}, and 
methods have been constructed to detect 
them.
\cite{garcia2021cluster} apply
cellwise robust methods to functional 
data.

%%%%%%%%%%%%%%%%%%%%%%%%%%%%%%%%%%%%%
\section{Breakdown values}
\label{sec:bdv}

The breakdown value plays an important
role in the study of casewise robust 
methods, because a positive breakdown 
value is a necessary condition for 
robustness. 
In several models, natural equivariance 
properties imply an upper bound of 50\% 
on the breakdown value, and methods 
have been developed that get close to 
this upper bound.

In the setting of cellwise robustness,
\cite{alqallaf2009} define the 
asymptotic cellwise breakdown value of 
a location estimator.
Here we will focus on finite-sample 
breakdown values in the sense of 
\cite{donoho1983}
and \cite{lopuhaa1991breakdown}.
Below we will study two statistical 
models of practical importance, and 
show that under some conditions that 
appear natural at first sight it is 
impossible to obtain a high cellwise
breakdown value.

\subsection{Breakdown of location and
            covariance estimators}
Consider a finite sample consisting of 
$n$ data points in $d$ dimensions,
denoted by its $n \times d$ data matrix
$\bX$.
We want to estimate its unknown location
vector by the estimator $\bhmu$.
Then we define the finite-sample 
cellwise breakdown value of $\bhmu$ at 
$\bX$ as the smallest fraction of
cells per column that need to be replaced 
to carry the estimate outside all bounds.
Formally, denote by $\bX^m$ any corrupted 
sample obtained by replacing at most $m$ 
cells in each column of $\bX$ by arbitrary 
values. 
Then the {\it finite-sample cellwise 
breakdown value} of the location estimator 
$\bhmu$ at $\bX$ is given by 
\begin{equation} \label{eq:bdvloc}
  \varepsilon^*_n(\bhmu, \bX)=
  \min \left\{\frac{m}{n}:\;
	\sup_{\bX^m}{\left|\left|\bhmu(\bX^m) - 
	\bhmu(\bX)\right|\right|} = 
	\infty\right\}.
\end{equation}
This is a natural finite-sample version
of the asymptotic cellwise breakdown 
value defined by \cite{alqallaf2009}.

Let us start with a trivial remark.
If an estimator has an upper bound 
on its casewise breakdown value,
then that upper bound also holds for 
its cellwise breakdown value. Indeed, 
a contaminated dataset obtained by 
replacing $m$ cases by other cases
can also be created by replacing the
cells of the replaced cases by those
of the replacing cases, which affects
$m$ cells in each column.
Therefore \eqref{eq:bdvloc} is at
most $\lfloor (n+1)/2 \rfloor/n$ for
any translation equivariant location
estimator due to Theorem 2.1 of
\cite{lopuhaa1991breakdown}.

We will see that some conditions that
appear natural imply a rather low
cellwise breakdown value.
One such condition is
\begin{equation}
\begin{aligned}\label{eq:locEFP}
 \mbox{if all points of } \bX
 \mbox{ lie in a lower-dimensional}\\ 
 \mbox{ affine subspace, then } 
 \bhmu(\bX)
 \mbox{ lies in that subspace as well.}
\end{aligned}
\end{equation}
This is a very weak exact fit property,
which nevertheless restricts the 
breakdown value as shown by the 
following result.

\begin{proposition} \label{prop:loc}
The cellwise breakdown value of any 
location estimator $\bhmu$
satisfying~\eqref{eq:locEFP} is 
bounded by
\begin{equation} \label{eq:locbdv}
  \varepsilon^*_n(\bhmu, \bX) =
	\left\lceil \frac{n}{d} 
	\right\rceil/n\, \approx 
	\frac{1}{d}
\end{equation}
at any dataset $\bX$.
\end{proposition}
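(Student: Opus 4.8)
The plan is to prove the upper bound $\varepsilon^*_n(\bhmu,\bX)\le \lceil n/d\rceil/n$ by exhibiting, for $m=\lceil n/d\rceil$, a family of corrupted samples $\bX^m$ (replacing at most $m$ cells per column) on which $\|\bhmu(\bX^m)\|\to\infty$. The whole idea is to weaponize the exact-fit property \eqref{eq:locEFP}: if I can force all $n$ contaminated rows to lie exactly on one affine hyperplane $H$, then $\bhmu(\bX^m)$ must lie on $H$ as well, so driving $H$ off to infinity drags the estimate with it.

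The key observation is that placing a row onto a prescribed hyperplane costs only a single cell change. Fix the all-ones normal and consider the hyperplanes $H_t=\{x\in\mathbb{R}^d:\bone^\top x=t\}$ for $t\in\mathbb{R}$, noting $\mathrm{dist}(\bzero,H_t)=|t|/\sqrt d$. For any row $\bx_i$ and any chosen coordinate $c(i)$, the single equation $\bone^\top\bx_i=t$ is linear in the entry $x_{i,c(i)}$ with nonzero coefficient $1$, so it has the unique solution $x_{i,c(i)}=t-\sum_{j\ne c(i)}x_{ij}$ while every other entry of the row is left untouched. Thus changing exactly one cell per row suffices to move every row onto $H_t$.

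Next I would distribute these one-per-row changes across the $d$ columns so that no column is hit more than $m=\lceil n/d\rceil$ times: partition the $n$ rows into blocks $B_1,\dots,B_d$ of size at most $\lceil n/d\rceil$ and, for $i\in B_j$, set $c(i)=j$. This is feasible precisely because $d\lceil n/d\rceil\ge n$, and it is exactly here that the value $\lceil n/d\rceil$ enters. The resulting $\bX^m$ replaces at most $m$ cells in each column, all of its rows lie in the lower-dimensional subspace $H_t$, so \eqref{eq:locEFP} gives $\bhmu(\bX^m)\in H_t$ and hence $\|\bhmu(\bX^m)\|\ge |t|/\sqrt d$. Letting $t\to\infty$ yields $\sup_{\bX^m}\|\bhmu(\bX^m)-\bhmu(\bX)\|=\infty$, and therefore $\varepsilon^*_n(\bhmu,\bX)\le\lceil n/d\rceil/n$.

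I do not expect a serious technical obstacle; the work lies in finding the right mechanism rather than in any calculation. The one point that needs care --- and the crux of getting the constant $\lceil n/d\rceil$ exactly --- is the pairing of two facts: (i) a hyperplane has codimension one, so a single linear constraint per row suffices, which translates into a single cell change per row; and (ii) a nonzero normal (here $\bone$) guarantees that \emph{any} cell of a row can be the one adjusted, giving full freedom in the column-balancing step. A subspace of higher codimension would impose more constraints per row and force more cell changes, so the hyperplane construction is what keeps the count at one cell per row. I would also stress that this is only an upper bound implied by \eqref{eq:locEFP}: estimators satisfying the property can break down even faster (the coordinatewise mean already fails with a single cell), whereas estimators violating it, such as the coordinatewise median, escape the bound entirely.
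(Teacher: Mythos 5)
Your proposal is correct and follows essentially the same route as the paper's proof: a hyperplane with all-ones normal (hence not parallel to any coordinate axis), one cell changed per row to place it on that hyperplane, the changes balanced across columns so that each column receives at most $\lceil n/d\rceil$ of them, and the exact-fit property~\eqref{eq:locEFP} forcing $\bhmu$ onto the hyperplane, which is then pushed to infinity. Your write-up is merely more explicit about the column-balancing partition and the distance bound $|t|/\sqrt{d}$, which the paper leaves implicit.
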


The proof uses the fact that
the data can be moved to a hyperplane
by replacing no more than
$\lceil n/d \rceil$ cells in each 
variable. All proofs of this section 
can be found in the Appendix.

\cite{alqallaf2009} prove an upper 
bound on the asymptotic cellwise 
breakdown value of a location 
estimator that also goes to zero
like $1/d$.
For this they require two conditions.
One of these is the so-called 
$\delta$-consistency of the estimator.
The other is affine equivariance of
$\bhmu$, which says that for all 
$d \times 1$ vectors $\bv$ and all 
$d \times d$ nonsingular matrices $\bA$ 
it holds that 
\begin{equation} \label{eq:affloc}
 \bhmu(\bv \bone_d^{\btop}+\bX\bA)
  = \bv + \bA^{\btop} \bhmu(\bX)  
\end{equation}
where $\bone_d$ is the $d \times 1$
vector of ones.
In words, the estimator is equivariant
for translations and linear 
transformations.
The transposed $\bA^{\btop}$ 
in~\eqref{eq:affloc} is due
to the fact that $\bX$ contains the
data points as rows, not columns.

Our result is more general, in that
we do not need $\delta$-consistency,
and instead of affine equivariance
we can use orthogonal equivariance.
This is a weaker form of equivariance
which only requires~\eqref{eq:affloc}
for orthogonal matrices $\bA$.

\begin{corollary} \label{prop:loc2}
Any orthogonally equivariant location 
estimator satisfies 
property~\eqref{eq:locEFP}, 
and hence the upper bound of 
Proposition~\ref{prop:loc} applies.
\end{corollary}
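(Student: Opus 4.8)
The plan is to verify the weak exact fit property~\eqref{eq:locEFP} directly from orthogonal equivariance, after which Proposition~\ref{prop:loc} delivers the bound immediately. Suppose every row of $\bX$ lies in a lower-dimensional affine subspace $V$, which I write as $V = \bb + W$ for some point $\bb$ and a proper linear subspace $W \subsetneq \mathbb{R}^d$. The first step is to remove the offset: applying translation equivariance (the case $\bA = \bI$ in~\eqref{eq:affloc}) with the shift $-\bb$ reduces the claim to the situation where all data points lie in the linear subspace $W$ itself, and where it suffices to show $\bhmu(\bX) \in W$.

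The heart of the argument is to exhibit a single orthogonal map that leaves the data invariant but pins down $\bhmu$. Let $\bA$ be the orthogonal involution acting as the identity on $W$ and as $-\bI$ on the orthogonal complement $W^\perp$; concretely $\bA = \bP_W - \bP_{W^\perp}$, where $\bP_W$ and $\bP_{W^\perp}$ denote the orthogonal projections onto $W$ and $W^\perp$. This $\bA$ is symmetric and orthogonal, hence an admissible transformation in the orthogonal equivariance property. Because every data point lies in $W$ and $\bA$ fixes $W$ pointwise, each transformed row $x_i^{\btop}\bA$ equals $x_i^{\btop}$, so the transformed data matrix satisfies $\bX\bA = \bX$.

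Next I would invoke orthogonal equivariance~\eqref{eq:affloc} with $\bv = \bzero$ and this $\bA$: it gives $\bhmu(\bX) = \bhmu(\bX\bA) = \bA^{\btop}\bhmu(\bX) = \bA\,\bhmu(\bX)$, using $\bA = \bA^{\btop}$. Hence $\bhmu(\bX)$ lies in the $+1$ eigenspace of the involution $\bA$, which is exactly $W$. Translating back by $+\bb$ places $\bhmu$ in $V$, which establishes~\eqref{eq:locEFP}; Proposition~\ref{prop:loc} then yields the stated upper bound.

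There is no deep obstacle here; the only points requiring care are the transpose bookkeeping (the data points are rows of $\bX$, so the transformation acts through $\bA^{\btop}$) and the choice of transformation. The key economy is to take $\bA$ with $+1$ eigenspace equal to $W$, so that the fixed-point condition $\bhmu = \bA\,\bhmu$ is equivalent to membership in $W$ in one stroke, regardless of the codimension of $W$. This is cleaner than iterating over a sequence of hyperplane reflections, each of which would only confine $\bhmu$ to a single hyperplane containing $W$.
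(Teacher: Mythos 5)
Your proof is correct, and it rests on the same core mechanism as the paper's own proof: exhibit an orthogonal transformation that leaves the dataset invariant, then use equivariance to force $\bhmu$ to be a fixed point of that transformation. The difference is in the execution, and yours is slightly more general. The paper recenters at the data point $\bx_1$, applies a rotation $\bU$ carrying the shifted hyperplane onto a coordinate hyperplane, and reflects the last coordinate with $\bV = \diag(1,\ldots,1,-1)$, concluding that the last coordinate of the transformed estimate vanishes before mapping back. You skip the rotation by taking the involution $\bA = \bP_W - \bP_{W^\perp}$ directly, whose $+1$ eigenspace is exactly $W$; this coordinate-free choice handles an affine subspace of arbitrary codimension in one stroke, whereas the paper's proof as written treats only the hyperplane case (codimension one). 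That case is all Proposition~\ref{prop:loc} actually needs, since its breakdown construction pushes the data onto a hyperplane, but your version establishes property~\eqref{eq:locEFP} in its stated generality without the additional observation that a lower-dimensional subspace is an intersection of hyperplanes. Your transpose bookkeeping (the action through $\bA^{\btop}$ on rows, with $\bA^{\btop}=\bA$ here) and the use of translation equivariance as the $\bA=\bI$ case of~\eqref{eq:affloc} are both handled correctly, so there is no gap.
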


All this illustrates that orthogonal 
equivariance, and even the weaker 
condition~\eqref{eq:locEFP}, are  
unpromising properties for location 
estimators that want to be cellwise robust.
But for location, one can easily avoid 
this problem by combining coordinatewise
robust location estimates.

\begin{figure}[!ht]
\centering
\includegraphics[width=0.7\columnwidth]
  {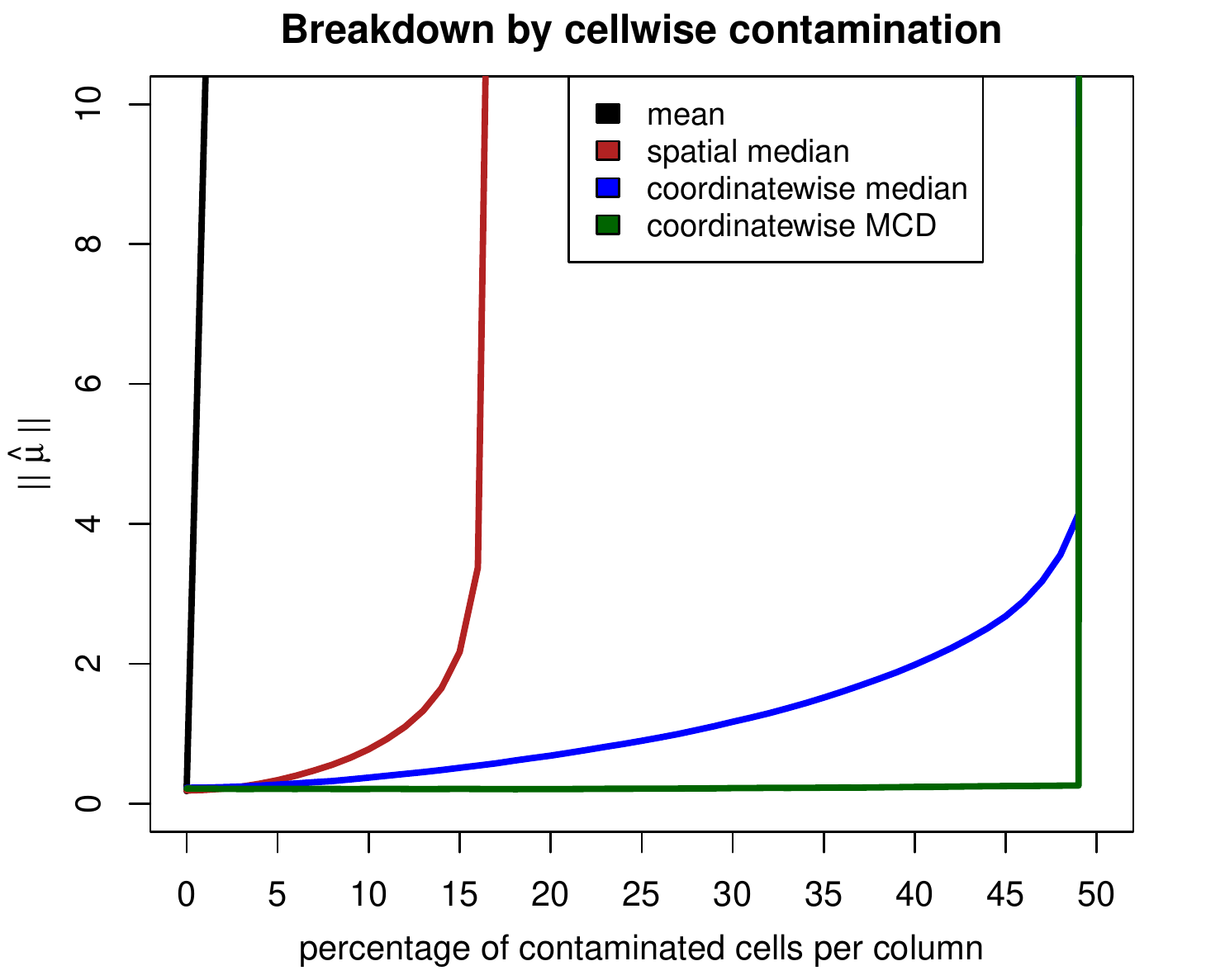}
\caption{Empirical illustration of breakdown
   due to cellwise contamination, for
   $n=100$ cases in $d=4$ dimensions.}
\label{fig:breakdownplot}
\end{figure}

We provide a simple empirical 
illustration. The experiment went as follows.
First $n=100$ points were generated from
the standard Gaussian distribution in
$d=4$ dimensions.
These have $k=0$ outlying cells per
column. Next, the number of outlying
cells per column was incremented in steps 
of 1 such that the cases with outlying
cells overlap as little as possible,
so when $k=25$ all cases have one 
outlying cell. The outlying cells were 
all given the value 500. 
On each contaminated dataset the
following location estimators were run:
the arithmetic mean, the spatial median
which minimizes 
$\sum_{i=1}^n ||\bx_i - \bhmu||$,
the coordinatewise 
median~\eqref{eq:comed},
and the coordinatewise univariate
MCD estimator.
Figure~\ref{fig:breakdownplot} plots
the euclidean norm $||\bhmu||$ of each
estimator as a function of the percentage
of contaminated cells per column.
It shows the averaged norms over 
200 replications. The horizontal
axis ends at 50\%, corresponding to the
upper bound on the breakdown value of
translation equivariant estimators.

In Figure~\ref{fig:breakdownplot} 
we see that the mean is the first to move
far away. In fact, if we would let $n$
go up and the value of every outlying cell 
go from 500 to infinity, the line of the 
mean would become vertical, since 1 
cellwise outlier per column is enough for
breakdown.
In contrast, the norm of the spatial
median stays low for a long time before
going up. (In fact, for exactly the
fraction $1/d = 25\%$ of outlying cells 
per column the spatial median attains 
norm 250, because then all datapoints 
lie near the hyperplane 
$x_1 + x_2 + x_3 + x_4 = 500$.)
Note that the spatial median is not
affine equivariant but it is 
orthogonally equivariant, so
Corollary~\ref{prop:loc2} applies.

The coordinatewise median escapes 
the upper bound $1/d$ because it does 
not satisfy the 
condition~\eqref{eq:locEFP}. Indeed,
\cite{RL1987} already gave a simple
counterexample (on page 250) where the
coordinatewise median lies outside 
the hyperplane whenever $d>2$.
(For $d \leqslant 2$ the bound is
trivially satisfied.)
Since the median of each variable
only breaks down when the variable has
50\% or more outlying values, the
same holds when these separate medians 
are combined in~\eqref{eq:comed}.
The reasoning is the same for 
combining the univariate MCD estimates
of each coordinate.

Cellwise robust covariance estimation
is harder than location.
Estimators $\bhSigma$ of a covariance
(scatter) matrix can break down in two
ways.
Analogously to the casewise setting, we 
can define the {\it cellwise explosion 
breakdown value} of the covariance 
estimator $\bhSigma$ as
\begin{equation} \label{eq:explosion}
  \varepsilon^+_n(\bhSigma, \bX)=
  \min \left\{\frac{m}{n}:\;
	\sup_{\bX^m}\lambda_1(\bhSigma) = 
	\infty\right\}
\end{equation}
where $\lambda_1$ denotes the largest 
eigenvalue.
Moreover, we define the {\it cellwise 
implosion breakdown value} of 
$\bhSigma$ as
\begin{equation} \label{eq:implosion}
  \varepsilon^-_n(\bhSigma, \bX)=
	\min \left\{\frac{m}{n}:\;
	\inf_{\bX^m}\lambda_d(\bhSigma) = 0
	\right\}
\end{equation}
where $\lambda_d$ is the smallest 
eigenvalue. 
Implosion is just as bad as explosion,
since a singular covariance matrix
is not invertible so one cannot 
compute Mahalanobis distances, carry 
out discriminant analysis, and so on.

Also in this setting, a good cellwise 
breakdown value is harder to obtain than
its casewise counterpart. For instance,
the casewise implosion breakdown value of
the classical covariance matrix 
at a typical dataset is very 
high, in fact it is $(n-d)/n \approx 1$.
This is because whenever $d+1$ of the
original data points are kept, the 
classical covariance remains nonsingular.
In stark contrast, its
{\it cellwise} implosion breakdown
value is quite low.
This even holds for all covariance
estimators that satisfy the following
intuitive condition analogous 
to~\eqref{eq:locEFP}:
\begin{equation}\label{eq:covEFP}
\begin{aligned}
 \mbox{if all points of } \bX
 \mbox{ lie in a lower-dimensional}\\ 
 \mbox{affine subspace, then } 
 \bhSigma(\bX)
 \mbox{ is singular.}
\end{aligned}
\end{equation}

\begin{proposition} \label{prop:impl}
The cellwise implosion breakdown value
of any covariance estimator $\bhSigma$
\mbox{satisfying} \eqref{eq:covEFP} is 
bounded by
\begin{equation} \label{eq:impl}
  \varepsilon^-_n(\bhSigma, \bX) =
	\left\lceil \frac{n-1}{d} 
	\right\rceil/n\, \approx 
	\frac{1}{d}
\end{equation}
for any dataset $\bX$.
\end{proposition}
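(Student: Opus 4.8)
The plan is to mimic the strategy behind Proposition~\ref{prop:loc}: force all data points into a lower-dimensional affine subspace by altering few cells per column, and then invoke the exact fit property~\eqref{eq:covEFP} to conclude that $\bhSigma$ becomes singular. The essential new ingredient, compared with the location case, is that for implosion we do not care \emph{where} the degenerate subspace sits — we only need singularity. This lets me pin the target hyperplane to an existing data point, so that one point can be left untouched and only $n-1$ points must be moved, which is exactly what produces the improved count $\lceil (n-1)/d \rceil$ rather than $\lceil n/d \rceil$.

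Concretely, I would fix one data point, say $\bx_1$, and leave it unchanged. Choose the hyperplane $H = \{\bx : \bone_d^{\btop}\bx = \bone_d^{\btop}\bx_1\}$ passing through $\bx_1$ with normal vector $\bone_d$, whose entries are all nonzero. I then distribute the remaining $n-1$ points among the $d$ columns so that each column is assigned at most $\lceil (n-1)/d \rceil$ of them; by the pigeonhole principle such a balanced assignment exists. For a point $\bx_i$ assigned to column $j$, I replace its single entry $x_{ij}$ by the value $\bone_d^{\btop}\bx_1 - \sum_{k \neq j} x_{ik}$, which forces $\bx_i$ onto $H$ while modifying only one cell, located in column $j$. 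Summing over the assignment, at most $\lceil (n-1)/d \rceil$ cells are changed in any single column.

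After this modification every one of the $n$ points, including the untouched $\bx_1$, lies on the $(d-1)$-dimensional hyperplane $H$, hence in a lower-dimensional affine subspace. Property~\eqref{eq:covEFP} then forces $\bhSigma$ of the corrupted sample to be singular, so $\lambda_d(\bhSigma) = 0$. Since this was achieved with $m = \lceil (n-1)/d \rceil$ replacements per column, definition~\eqref{eq:implosion} yields $\varepsilon^-_n(\bhSigma, \bX) \ls \lceil (n-1)/d \rceil / n$, which is the asserted bound.

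The step I expect to carry the conceptual weight is recognizing that one point can be spared: in the location setting one must drive the whole hyperplane off to infinity in order to move $\bhmu$, so all $n$ points have to ride the receding hyperplane, whereas here the hyperplane may stay fixed through $\bx_1$ and singularity still follows from~\eqref{eq:covEFP}. Everything else is a routine pigeonhole count together with a one-line coordinate solve, and the only mild caveat is the degenerate regime $n \ls d$, where the clean data already lies in a proper subspace so that $\varepsilon^-_n = 0$ and the bound holds trivially.
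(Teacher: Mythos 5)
Your proof is correct and follows essentially the same route as the paper's: keep one data point fixed, move each of the remaining $n-1$ points onto a hyperplane through that point whose normal (here $\bone_d$) has no zero entries, so that each move costs a single cell and no column receives more than $\lceil (n-1)/d \rceil$ replacements, then invoke~\eqref{eq:covEFP}. Your explicit hyperplane, the closed-form cell replacement, and the separate treatment of the degenerate case $n \ls d$ all match the paper's argument, merely spelled out in more detail.
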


This result appears to be new, and it is
also implied by equivariance.
Recall that a covariance estimator 
$\bhSigma$ is affine equivariant if for
all $d \times 1$ vectors $\bv$ and all 
$d \times d$ nonsingular matrices $\bA$ 
it holds that 
\begin{equation} \label{eq:aff}
 \bhSigma(\bv \bone_d^{\btop}+\bX\bA)
  = \bA^{\btop} \bhSigma(\bX) \bA\;. 
\end{equation}

\begin{corollary} \label{prop:impl2}
Any affine equivariant covariance 
estimator $\bhSigma$ satisfies
property~\eqref{eq:covEFP}, and 
hence the upper bound of 
Proposition~\ref{prop:impl} applies.
\end{corollary}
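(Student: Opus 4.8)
The plan is to show that affine equivariance~\eqref{eq:aff} implies the exact-fit property~\eqref{eq:covEFP}, after which Proposition~\ref{prop:impl} applies directly. So the entire task reduces to verifying the single implication: if all rows of $\bX$ lie in a lower-dimensional affine subspace, then $\bhSigma(\bX)$ is singular.

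First I would set up the geometry. Suppose all $n$ points of $\bX$ lie in an affine subspace of dimension $q < d$. This means there exist a vector $\bv$ and a nonzero direction $\ba \in \mathbb{R}^d$ orthogonal to the subspace, so that $(\bx_i - \bv)^{\btop}\ba = 0$ for every row $\bx_i$. The goal is to exhibit a nonzero vector in the kernel of $\bhSigma(\bX)$. The natural candidate is $\ba$ itself, and the key idea is to exploit equivariance under a well-chosen nonsingular transformation that flattens the data onto a coordinate hyperplane, where singularity is easy to read off.

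The main step is the following. Choose a nonsingular $d \times d$ matrix $\bA$ whose action sends the offending direction to a single coordinate axis; concretely, pick $\bA$ so that the transformed data $\bv'\bone_d^{\btop} + \bX\bA$ has its last coordinate column identically zero (this is possible because the data span only a $q$-dimensional affine set, so after translating by $\bv$ and applying a suitable linear map we can arrange that at least one column of the transformed data matrix vanishes). For such transformed data, \emph{every} reasonable covariance estimate must assign zero variance to the constant column, but rather than invoke this informally I would argue it through equivariance: by~\eqref{eq:aff} we have $\bhSigma(\bv'\bone_d^{\btop}+\bX\bA) = \bA^{\btop}\bhSigma(\bX)\bA$. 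The cleanest route is actually to run the implication in the other direction — start from data lying in a coordinate hyperplane $\{x_d = c\}$, where one column is constant, and show $\bhSigma$ is singular there; then transfer to a general affine subspace by an affine change of coordinates. To close the base case I still need one ingredient: that $\bhSigma$ assigns a genuine zero eigenvalue when a coordinate is constant. This follows because a constant column can be shifted to zero by translation (covered by $\bv$ in~\eqref{eq:aff}) and then scaled by an arbitrary factor $t$ via a diagonal $\bA = \diag(1,\ldots,1,t)$; affine equivariance forces $\bhSigma$ to scale the corresponding row and column by $t$, and since the data are unchanged the estimate is unchanged, which is only consistent if the relevant entries vanish, yielding singularity.

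The step I expect to be the main obstacle is precisely this last transfer: making rigorous that affine equivariance alone forces the degenerate coordinate to produce a zero eigenvalue, without smuggling in extra assumptions about $\bhSigma$ (such as consistency or a specific functional form). The scaling argument via $\bA = \diag(1,\ldots,1,t)$ is the crux, and I would state it carefully — the transformation fixes the data (since the last column is already constant and translation absorbs the shift) while~\eqref{eq:aff} multiplies the last row and column of $\bhSigma$ by $t$ for \emph{every} $t$, which is impossible unless those entries are zero, forcing $\ba = \boldsymbol e_d$ into the kernel. Once the exact-fit property~\eqref{eq:covEFP} is established in this way, the bound~\eqref{eq:impl} is immediate from Proposition~\ref{prop:impl}, and no further work on the breakdown value itself is needed.
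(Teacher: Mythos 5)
Your proposal is correct and follows essentially the same route as the paper's own proof: flatten the data onto a coordinate hyperplane by an affine map covered by~\eqref{eq:aff}, then apply the diagonal scaling $\bA = \diag(1,\ldots,1,t)$ (the paper takes $t=2$), which leaves the data unchanged while equivariance scales the last row and column of the estimate, forcing those entries to zero and hence singularity, which transfers back through the nonsingular congruence. No gaps; the argument matches the paper's.
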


The upper bound 
$\lceil (n-1)/d \rceil/n\approx 1/d$ 
in Corollary~\ref{prop:impl2} is a lot
lower than that inherited from the 
casewise breakdown value of affine 
equivariant
covariance estimators, which is 
$\lfloor (n-d+1)/2\rfloor /n 
\approx 0.5$\,.
The fact that implosion breakdown can
happen easily in the cellwise setting
was not mentioned in the literature 
before. 
For the cellwise robust covariance 
estimators of \cite{ollerer2015robust}, 
only their explosion breakdown value
was stated.
The above results make
it useful to rule out 
implosion by a prior constraint like 
$\lambda_d(\bhSigma) \geqslant a$, or
from a formulation in which 
$\bhSigma$ is a convex 
combination of two matrices, one of 
which is the identity matrix with a
small coefficient.
Both approaches would be at odds with 
affine equivariance, but here we are in 
the cellwise setting.
This mechanism has also been employed
in casewise settings without affine 
equivariance, see e.g. \cite{ollila2014}.
	
Note that in spite of the above negative
results it is possible to construct 
estimators of covariance that achieve a 
higher cellwise breakdown value, as in
\cite{cellMCD}. Their proposal for 
a cellwise robust MCD estimator attains 
a breakdown value of $(n-h+1)/n$ where 
$h$ can be as low as 
$\lfloor n/2\rfloor + 1$ but is
recommended to be chosen as $h = 0.75n$.
It can thus reach an asymptotic 
breakdown value of $0.5$, which 
is the highest possible.

\subsection{Breakdown in regression}	

For linear regression we use the 
model~\eqref{eq:reg} but write it a
little differently as
\begin{equation} \label{eq:bigreg}
 \by = \bX \bgamma + \mbox{noise} 
\end{equation}
where $\by = (y_1,\ldots,y_n)$ is the 
$n \times 1$ column vector of responses,
and the rows of $\bX$ are of the
form $(1,\bx_i^{\btop})$. 
The $(p+1) \times 1$ vector 
$\bgamma = (\alpha,\beta_1,\ldots, 
\beta_p)^{\btop}$ combines
the intercept and the slopes. 
We also combine the actual data 
(that is, everything but the first
column of $\bX$ which is constant)
in the $n \times (p+1)$ matrix $\bZ$
whose $i$-th row is
$(x_{i1},\ldots,x_{ip},y_i)$.

We now contaminate the dataset
$\bZ$. By $\bZ^m$ we
mean any corrupted set of covariates 
obtained by replacing at most $m$
cells in each column of $\bZ$ 
by arbitrary values.
We define the finite-sample cellwise 
breakdown value of the regression 
estimator $\bhgamma$ at the dataset
$\bZ$ as 
\begin{equation} \label{eq:bdvreg}
	\varepsilon^*_n(\bhgamma, \bZ)=
  \min \left\{\frac{m}{n}:\;
	\sup_{\bX^m}{\left|\left
	|\bhgamma(\bZ^m) -
	\bhgamma(\bZ)\right|\right|} =
	\infty\right\}.
\end{equation}
	
Many existing regression methods
$\bhgamma$ satisfy the following 
intuitive property:
\begin{equation}\label{eq:regEFP}
\begin{aligned}
 \mbox{if } y_i = 
 \bx_i^{\btop} \bgamma_0 
 \mbox{ for all $i=1,\ldots,n$ and } 
 \bX 
 \mbox{ is not}\\
 \mbox{in a lower-dimensional}
 \mbox{ subspace, then }
 \bhgamma = \bgamma_0\;.
\end{aligned}
\end{equation}
The condition that $\bX$ does not lie
in a lower-dimensional vector subspace
serves to avoid multicollinearity,
which causes problems for many 
regression methods.

\begin{proposition} \label{prop:reg}
The cellwise breakdown value of any 
regression estimator $\bhgamma$
satisfying~\eqref{eq:regEFP}
is bounded by
\begin{equation} \label{eq:bdreg}
  \varepsilon^*_n(\bhgamma, \bX, \by)
	\leqslant \left\lceil \frac{n-1}{p+1} 
	\right\rceil/n\, \approx 
	\frac{1}{p+1}
\end{equation}
at any dataset $\bZ$.
\end{proposition}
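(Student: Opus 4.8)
The plan is to mirror the proofs of Propositions~\ref{prop:loc} and~\ref{prop:impl}: I would exploit the exact-fit property~\eqref{eq:regEFP} by corrupting at most $m \coloneqq \lceil (n-1)/(p+1)\rceil$ cells in each column of $\bZ$ so that the corrupted data lie \emph{exactly} on a regression hyperplane with a coefficient vector $\bgamma_0$ that can be made arbitrarily large, all while keeping the corrupted design $\bX^m$ of full column rank. Once $\bZ^m$ satisfies $y_i = \bx_i^{\btop}\bgamma_0$ for every $i$ and $\bX^m$ is not contained in a lower-dimensional subspace, property~\eqref{eq:regEFP} forces $\bhgamma(\bZ^m) = \bgamma_0$, and sending $\|\bgamma_0\|\to\infty$ drives the estimate beyond all bounds using only $m$ altered cells per column, giving $\varepsilon^*_n(\bhgamma,\bZ)\ls m/n$.

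The construction proceeds as follows. First I would leave one row, say the first, untouched and choose $\bgamma_0$ so that this row already satisfies the linear relation, i.e. $y_1 = \alpha_0 + \sum_{j=1}^{p}\beta_{0,j}\,x_{1j}$. This is a single linear constraint on the $(p+1)$-dimensional vector $\bgamma_0$, so its solution set is an unbounded affine hyperplane on which the slopes $\beta_{0,j}$ are generically all nonzero; hence I can select a sequence with $\|\bgamma_0\|\to\infty$ and every $\beta_{0,j}\neq 0$. The remaining $n-1$ rows must each be brought onto the hyperplane by changing a single cell. I assign each such row to one of the $p+1$ columns, cycling through them so that no column receives more than $m=\lceil(n-1)/(p+1)\rceil$ assignments; this is feasible because $(p+1)\,m \gs n-1$. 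For a row assigned to the response column I set $y_i \coloneqq \alpha_0 + \sum_{j}\beta_{0,j}x_{ij}$, and for a row assigned to predictor column $j$ I solve the relation for that entry, setting $x_{ij} \coloneqq (y_i - \alpha_0 - \sum_{k\neq j}\beta_{0,k}x_{ik})/\beta_{0,j}$, which is well defined since $\beta_{0,j}\neq 0$. After these replacements every row satisfies $y_i=\bx_i^{\btop}\bgamma_0$ and at most $m$ cells have been altered in each column. Note that fixing the first row is precisely what sharpens the bound from $\lceil n/(p+1)\rceil/n$ to the stated $\lceil (n-1)/(p+1)\rceil/n$.

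The step I expect to be the main obstacle is verifying that the corrupted design $\bX^m$ stays of full column rank, as demanded by~\eqref{eq:regEFP}; a rank drop would void the exact-fit conclusion. The modified predictor entries depend polynomially on $\bgamma_0$, so the bad event $\mathrm{rank}(\bX^m)<p+1$ is cut out by the simultaneous vanishing of the $(p+1)\times(p+1)$ minors of $\bX^m$. Because the original $\bX$ has full rank, these polynomial conditions do not hold identically, so the offending $\bgamma_0$ are confined to a proper, lower-dimensional subset of the constraint hyperplane. I would therefore choose the escaping sequence to avoid this subset while still sending $\|\bgamma_0\|\to\infty$. With full rank secured, \eqref{eq:regEFP} yields $\bhgamma(\bZ^m)=\bgamma_0$ and hence $\|\bhgamma(\bZ^m)-\bhgamma(\bZ)\|\to\infty$ along the sequence, establishing the bound~\eqref{eq:bdreg}.
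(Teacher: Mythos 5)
Your proposal is correct and follows essentially the same route as the paper's proof: anchor one untouched row, pass a hyperplane $y=\bx^{\btop}\bgamma_0$ with arbitrarily large coefficients through it, move each of the remaining $n-1$ rows onto that hyperplane by altering a single cell, distributing these alterations over the $p+1$ columns of $\bZ$ so that no column receives more than $\lceil (n-1)/(p+1)\rceil$ changes, and then invoke~\eqref{eq:regEFP}. The only real difference is that you explicitly verify that the corrupted design retains full column rank (via a genericity argument in $\bgamma_0$), a hypothesis of~\eqref{eq:regEFP} that the paper's proof leaves implicit; the paper instead just fixes all slopes equal to a common nonzero $\beta_0$, which is a special case of your construction.
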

	
Also the weak condition~\eqref{eq:regEFP}
is satisfied under natural equivariance 
properties. The first of these is 
{\it regression equivariance}, which 
requires that for all $(p+1)\times 1$ 
vectors	$\bv$ it holds that
\begin{equation}\label{eq:regeq}
  \bhgamma(\bX,\by+\bX\bv)=
	\bhgamma(\bX,\by)+\bv
\end{equation}
for all datasets $(\bX,\by)$. This
intuitive condition is similar to 
translation equivariance for 
location estimators.
The second property is {\it scale
equivariance}, meaning that for any
constant $c$ we have
\begin{equation}\label{eq:scaleq}
  \bhgamma(\bX,c\,\by)=
			  c\,\bhgamma(\bX,\by)\;.
\end{equation}
We do not even need
affine equivariance, which says that
$\bhgamma(\bX \bA,\by)=
\bA^{-1}\bhgamma(\bX,\by)$
for all nonsingular matrices $\bA$.
	
\begin{corollary} \label{prop:reg2}
Any estimator $\bhgamma$ that is
regression equivariant~\eqref{eq:regeq} 
and scale equivariant~\eqref{eq:scaleq} 
satisfies property~\eqref{eq:regEFP}, 
and hence the upper bound of 
Proposition~\ref{prop:reg} applies.
\end{corollary}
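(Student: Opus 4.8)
The plan is to derive the exact fit property~\eqref{eq:regEFP} as a direct algebraic consequence of the two equivariances, after which Proposition~\ref{prop:reg} yields the stated bound immediately.

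First I would pin down the behavior of $\bhgamma$ at a zero response. Applying scale equivariance~\eqref{eq:scaleq} with $c=0$ to any dataset gives $\bhgamma(\bX,\bzero)=\bhgamma(\bX,0\cdot\by)=0\cdot\bhgamma(\bX,\by)=\bzero$, so a regression- and scale-equivariant estimator must send the all-zero response to the zero coefficient vector.

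Next I would use regression equivariance~\eqref{eq:regeq} to propagate this to an arbitrary exact fit. In the setting of~\eqref{eq:regEFP} the responses satisfy $y_i=\bx_i^{\btop}\bgamma_0$ for all $i$, i.e. $\by=\bX\bgamma_0$ in the notation of~\eqref{eq:bigreg}. Taking the base response to be $\bzero$ and the shift vector to be $\bv=\bgamma_0$ in~\eqref{eq:regeq} gives
\begin{equation*}
  \bhgamma(\bX,\by)=\bhgamma(\bX,\bzero+\bX\bgamma_0)
  =\bhgamma(\bX,\bzero)+\bgamma_0=\bgamma_0\,,
\end{equation*}
where the last equality uses the previous step. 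This is precisely the conclusion $\bhgamma=\bgamma_0$ required by~\eqref{eq:regEFP}, so its hypotheses are verified and Proposition~\ref{prop:reg} applies to deliver the bound~\eqref{eq:bdreg}.

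There is no real computational obstacle here: the whole argument is a two-line manipulation of the equivariance identities. The only point I would flag is the role of the full-rank hypothesis on $\bX$ in~\eqref{eq:regEFP}. It is what makes $\bgamma_0$ uniquely determined by the exact-fit data, and indeed what allows a regression-equivariant $\bhgamma$ to exist at all, since a nontrivial right null vector $\bv$ of $\bX$ would force the contradiction $\bhgamma(\bX,\by)=\bhgamma(\bX,\by)+\bv$. The deduction above never actually invokes this hypothesis, so I would simply note that it holds throughout and use it only to ensure the statement is well-posed.
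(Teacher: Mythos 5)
Your proof is correct and is essentially the paper's own argument: both proofs combine scale equivariance with $c=0$ (to show the zero response yields the zero coefficient vector) and regression equivariance with shift $\bgamma_0$ (to transfer this to the exact-fit data), merely applying the two steps in the opposite order. Your closing remark about the full-rank hypothesis --- that it is never invoked in the deduction but is what makes regression equivariance (and hence the statement) well-posed at such $\bX$ --- is a correct and worthwhile observation that the paper leaves implicit.
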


We could easily write down analogous
results for implosion of the scale
estimator of the noise term.

In conclusion, if we want a regression
estimator with a better cellwise 
breakdown value we have to let go of 
the natural combination of equivariance 
properties~\eqref{eq:regeq}
and~\eqref{eq:scaleq}, and even of the 
intuitive condition~\eqref{eq:regEFP}.
When working with high-dimensional data
we are used to penalizations that cause
the regression to deselect some of the 
covariates, but even for 
lower-dimensional data one may have to 
resort to such techniques, 
or novel ones.
	
%%%%%%%%%%%%%%%%%%%%%%%%%%%%%%%%%%%%%%%%
\section{Cellwise robust correspondence 
         analysis} \label{sec:CA}
				
In this section we present the first
cellwise robust method for correspondence 
analysis.

\subsection{Classical correspondence 
            analysis}
Correspondence analysis (CA) was proposed 
by \cite{hirschfeld1935connection} and 
further developed by 
\cite{benzecri1973analyse} as a method 
for analyzing categorical data that can 
be represented by a contingency table. 
The main aim is to construct a so-called 
biplot, a bivariate plot summarizing 
part of the structure in the contingency 
table. This biplot can be thought of as 
analogous to plotting the first two 
principal components of continuous 
numerical data.

Let $\bX$ be an $n\times d$ contingency 
table of counts $x_{ij}$ with 
$i = 1,\ldots,n$ and $j = 1,\ldots,d$.
Denote by $N = \sum_{i=1}^{n}
\sum_{j=1}^{d}\bX_{ij}$ the sum of all 
counts in $\bX$, and let $\bP=\bX/N$ 
be the matrix of relative frequencies. 
Denote by $\br$ the $n \times 1$ column 
vector containing the sums of the rows 
of $\bP$\,, and by $\bc$ the 
$d \times 1$ vector containing the sums 
of its columns. 
Also write $\bD_{\br} = \diag(\br)$ and 
$\bD_{\bc} = \diag(\bc)$. 
Finally, let $\bR = \bD_{\br}^{-1} \bP$ 
be the matrix of standardized row 
profiles, for which each row sums to 1. 
Correspondence analysis is based on
the matrix of weighted centered row 
profiles
\begin{equation}\label{eq:matrixS}
\bS := \bD_{\br}^{1/2}\left(\bR - \bone_d 
 \bc^{\btop}\right)\bD_{\bc}^{-1/2}
\end{equation}
where $\bone_d$ is the $d \times 1$
column vector of ones.
Note that each row of $\bone_d \bc^{\btop}$
also sums to one, so the row totals of
$\bR - \bone_d \bc^{\btop}$ are all zero.
This implies that the columns of 
$\bR - \bone_d \bc^{\btop}$ are 
linearly dependent, and therefore
$\bS$ is not of full rank either.

From here on, $\bS$ is treated as if 
it were a dataset with continuous 
variables (columns). 
Since $\bS$ is singular by
construction, it is useful to carry 
out its singular value decomposition,
yielding
\begin{equation}\label{eq:SVD}
   \bS = \bU \bGamma \bV^{\btop} \,.
\end{equation}
Here $\bU$ contains the left singular 
vectors in its orthogonal columns, 
$\bGamma$ is a diagonal matrix with 
the singular values on the diagonal,
and $\bV$ contains the right singular 
vectors in its orthogonal columns.
If we denote the rank of $\bS$ by
$k$, the matrix $\bU$ is $n \times k$,
$\bGamma$ is $k \times k$, and the
loadings matrix $\bV$ is $d \times k$\,.
Using this singular value decomposition 
we can now represent the rows as well 
as the columns of the data set $\bS$\,.
The principal coordinates of the rows 
are given by 
$\bD_{\br}^{-1/2} \bU \bGamma $, and
those of the columns are given by 
$\bD_{\bc}^{-1/2} \bV \bGamma$.
In a biplot we only show the
first two coordinates of both. 

\subsection{Cellwise robust CA}

In order to construct a cellwise robust
correspondence analysis method, we need 
a robust version of the singular value
decomposition~\eqref{eq:SVD} of $\bS$.
For this purpose we construct a 
modification of the cellwise robust
MacroPCA method of 
\cite{hubert2019macropca}.
Its original version fits the PCA model
\begin{equation} \label{eq:PCA}
  \bS = \bone_n \bmu^{\btop} + 
	\bT \bV^{\btop} + \mbox{noise} 
\end{equation}
where $\bmu$ is the center, the matrix 
of scores $\bT$ is $n \times k$, and 
the loadings matrix $\bV$ is 
$d \times k$. The noise has the same 
dimensions as $\bS$.
In this context the choice of $k$, the
number of principal components, is
important. 
This is because a point is considered
outlying when its orthogonal distance
to the fitted subspace is high, so
if the maximal $k$ were chosen all
points would be in the subspace and
none would be considered outlying.
This also implies that the loadings 
are not nested, that is, the loading 
columns for $k=2$ are not a subset of 
those for $k=3$.
MacroPCA selects $k$ as the
smallest number of components that
explains at least 80\% of the 
variability.

Since our goal is to fit the 
model~\eqref{eq:SVD}, we have added
an option to MacroPCA which fixes the
center $\bmu$ beforehand rather than
estimating it from the data.
For our current purpose we fix $\bmu$
at zero, so the first term 
of~\eqref{eq:PCA} disappears.
Let us denote the estimated loadings
and scores by $\bhV$ and $\bhT$.
For the singular values in the diagonal 
matrix $\bhGamma$ we take the square 
root of $n$ times the eigenvalues 
estimated by MacroPCA. 
Finally, we compute 
$\bhU = \bhT \bhGamma^{-1}$. 
The combination of the resulting 
matrices $\bhU$, $\bhGamma$ and 
$\bhV$ provides a robust version of 
the decomposition~\eqref{eq:SVD}.
Using these matrices, the biplot can 
be drawn as before.

To illustrate this approach to cellwise 
robust correspondence analysis we consider 
two examples studied previously
by \cite{riani2022robust} in the context 
of casewise robust correspondence analysis. 

\subsection{Correspondence analysis of 
            the clothes data}
The first example is a contingency table 
containing trade flows of clothes from
outside the European Union into each 
of its 28 member states (the data is 
from before 2020, so the United Kingdom 
was still included). 
The columns in the contingency table 
in~\cite{riani2022robust} are five 
different price brackets, from lowest 
to highest. 
The contingency table $\bX$ is thus of 
size $28 \times 5$, as are the matrices 
$\bP$ and $\bS$.

\begin{figure}[ht!]
\centering
\includegraphics[height=0.89\textheight]
  {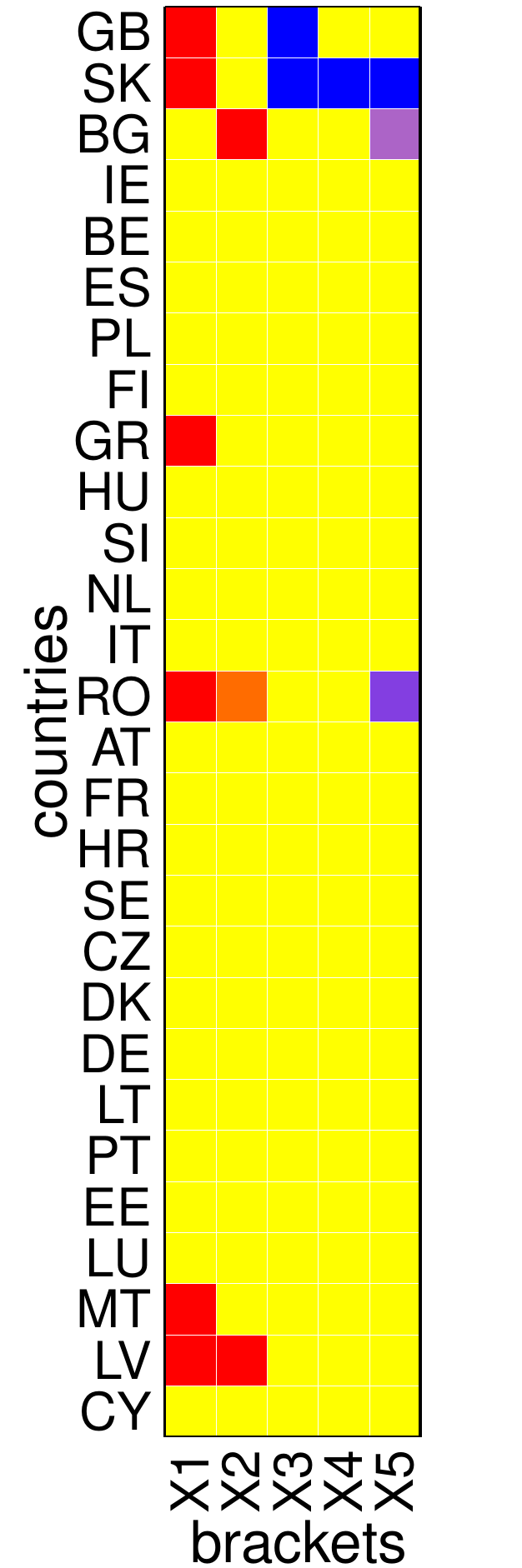}
\includegraphics[height=0.89\textheight]
  {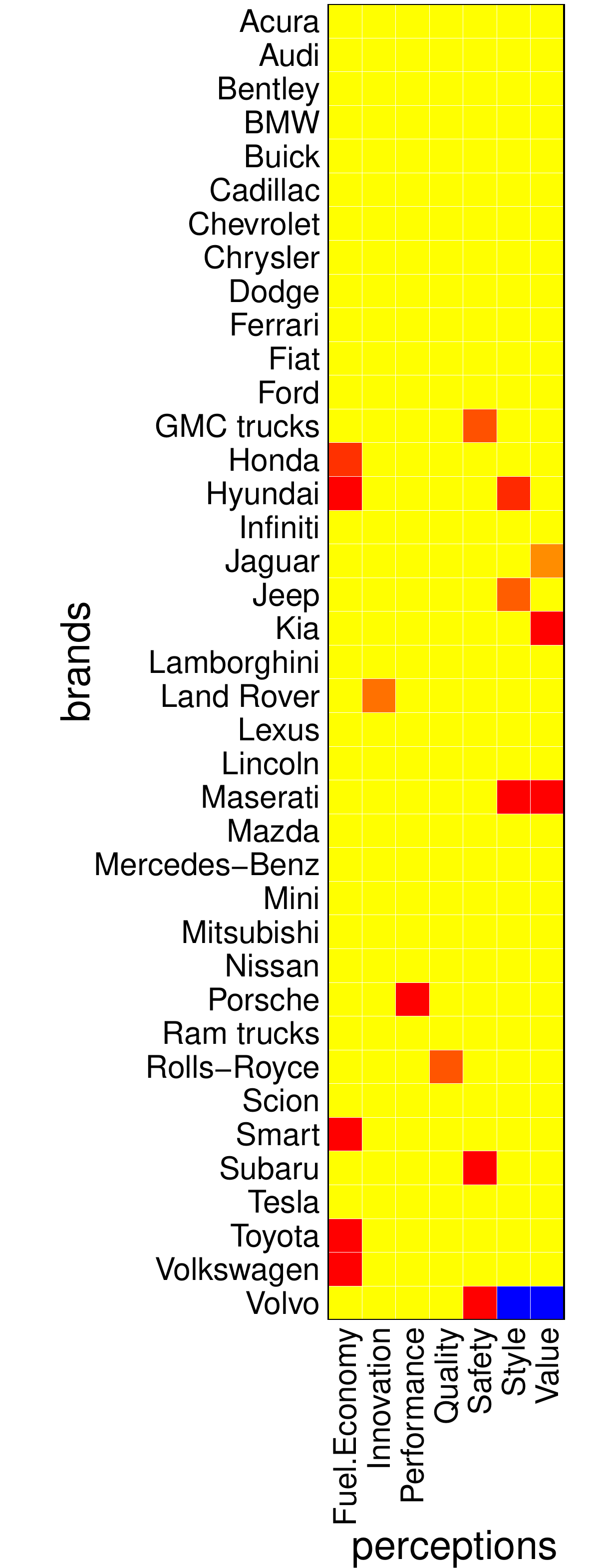}
\caption{Cellmap showing the result of 
DetectDeviatingCells (\textbf{DDC}) on the 
clothes data (left) and on 
the brands data (right).}
\label{fig:ddc}
\end{figure}

As an initial exploration of the data, 
we start by applying \textbf{DDC} to $\bS$. 
(Note that we should not apply \textbf{DDC} 
to the original contingency table $\bX$
because the scales of its rows are not 
comparable, as they are correlated with 
the population size of their countries.)
The \textbf{DDC} algorithm yields a predicted
data matrix $\bhS$, and thus also
cellwise residuals $\bS - \bhS$. 
The left panel of Figure~\ref{fig:ddc}
shows the \textbf{DDC} cellmap of the clothes
data, which graphically represents the 
standardized cellwise residuals.
Red cells indicate standardized 
residuals above 
$\sqrt{\chi^2_{1,0.99}} \approx 2.57$,
with the intensity of the color 
increasing with the actual residual. 
Such cells have a higher value than
predicted.
Blue cells indicate standardized 
residuals below 
$-\sqrt{\chi^2_{1,0.99}}$\,, so their
value was lower than predicted.
The majority of the cells, shown in 
yellow, are roughly in line with 
their predicted values.

The cellmap draws our attention to a
few unusual trade flows.
In particular, the United Kingdom
(with country code GB), Slovakia (SK), 
Greece (GR), Romania (RO), Malta (MT)
and Latvia (LV) import many clothes 
of the lowest price segment,
compared to what would be expected
based on their other trade flows.
These results can be interpreted. 
For instance, it is known that the
United Kingdom imports large 
quantities of very cheap clothing 
from Bangladesh, India and China. 
Also Slovakia imports a lot from 
these countries, and judging from
its blue cells it imports relatively 
few clothes from the upper three 
segments. 

\begin{figure}[ht!]
\centering
\includegraphics[width=0.7\columnwidth]
  {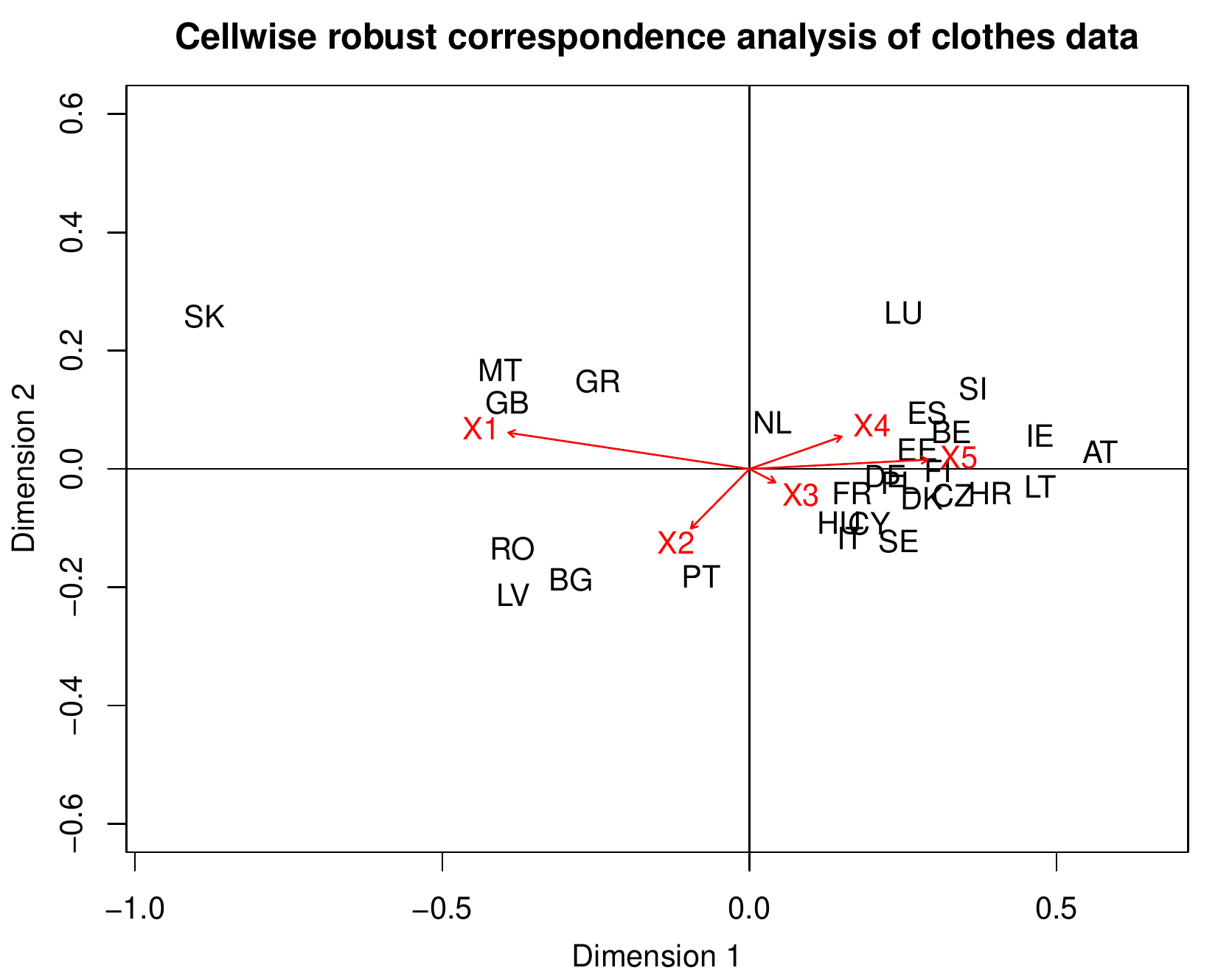}
\caption{Biplot of the clothes data based 
   on the proposed cellwise robust fit.}
\label{fig:clothes_biplot}
\end{figure} 

Next, we apply the zero-center 
version of MacroPCA to $\bS$,
yielding a robust loadings matrix 
$\bhV$, scores $\bhT$, and 
estimated eigenvalues.
From these we derive $\bhGamma$
and $\bhU = \bhT \bhGamma^{-1}$.
Using $\bhU$, $\bhGamma$ and
$\bhV$ yields the biplot in
Figure \ref{fig:clothes_biplot}.
We see that SK is outlying on the left
and loads very strongly onto the category 
of the lowest price bracket $X_1$\,, 
whereas it lies in the opposite direction 
of the variables $X_3$, $X_4$ and $X_5$
corresponding with the three highest 
price brackets. 
We also note the small group of 
countries MT, GB, GR that lie in the
direction of the lowest bracket $X_1$\,,
and the countries RO, LV, BG in the 
direction of the two lowest brackets
$X_1$ and $X_2$\,.
The countries on the right spend more
on the higher brackets.
Overall, this biplot looks quite similar
to the casewise robust biplots in
Figure 4 of~\cite{riani2022robust}. 

\subsection{Correspondence analysis 
            of the brands data}
						
The second example is a contingency table 
summarizing the 2014 Auto Brand Perception 
survey by Consumer Reports (USA), which is 
publicly available on 
\url{https://boraberan.wordpress.com/2016/09/22/}. 
The survey questioned 1578 participants on 
what they considered attributes of 39 
different car brands. 
The list of possible attributes consisted 
of Fuel Economy, Innovation, Performance, 
Quality, Safety, Style, and Value.
Respondents had to select all the 
characteristics from the list which they 
felt applied to a brand. The resulting 
contingency table $\bX$ is thus of size 
$39 \times 7$, so the matrix $\bS$ has 
the same dimensions.

We again start by applying \textbf{DDC} to $\bS$ 
in order to flag outlying cells. 
The resulting cellmap is in the right 
panel of Figure \ref{fig:ddc}, and
reveals some interesting insights. 
We see that Volvo is quite exceptional
in that as many as three of its cells 
are flagged.
Its Safety perception is higher than 
predicted, whereas Style and Value are 
lower than would be expected from its 
other characteristics. 
Hyundai and Maserati have two flagged 
cells. 
Hyundai has rather high Fuel Economy 
and Style given its other characteristics, 
and Maserati seems to have exceptionally 
high Style and Value. 
(One wonders whether sometimes `Value'
was interpreted as `expensive'
rather than `value for money'.) 
Then there are another 12 cases 
with a single cellwise outlier. 
Examples are GMC trucks with high 
perceived Safety, Porsche with high 
Performance, and Toyota and Volkswagen 
with high Fuel Economy. 
In total, 15 out of the 39 brands have
at least one cell flagged by \textbf{DDC}.
The total number of flagged cells is 
19, so only 7\% of all cells, but they
affect close to 40\% of the cases. 
Still, a casewise robust analysis with 
a high-breakdown method can work here 
since over 50\% of the cases have no 
outlying cells.

\begin{figure}[ht!]
\centering
\includegraphics[width=0.7\columnwidth]
  {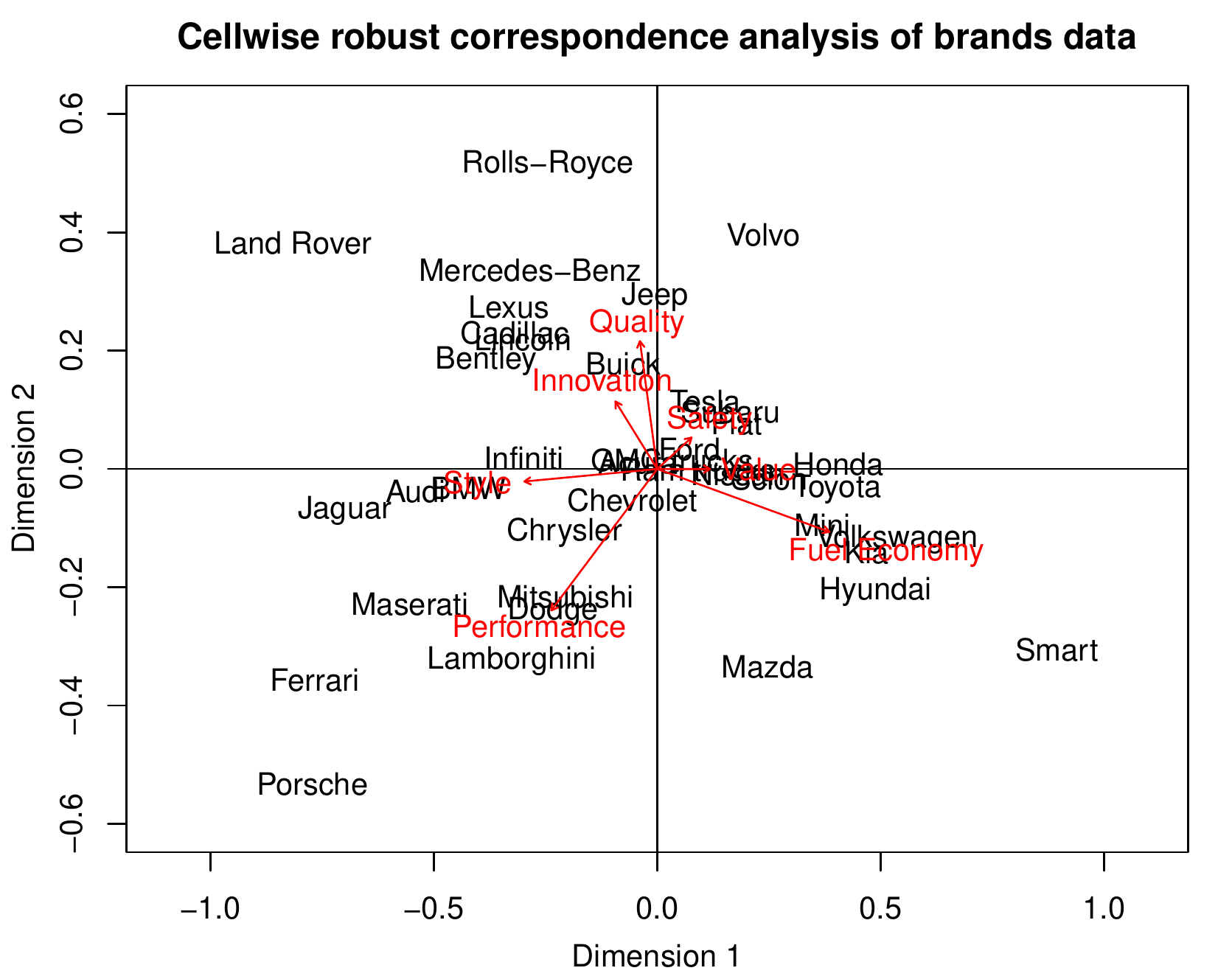}
\caption{Biplot of the brands data based 
   on the proposed cellwise robust fit.}
\label{fig:brands_biplot}
\end{figure} 

Figure~\ref{fig:brands_biplot} shows 
the biplot based on the cellwise robust
MacroPCA method. It is rather similar to
that obtained by~\cite{riani2022robust},
and both robust fits are quite different
from the classical correspondence 
analysis shown in their Figure 6.
We see some intuitively explainable 
structure in 
Figure~\ref{fig:brands_biplot}. 
Volvo is often considered one of the 
safest car brands, and indeed lies in
the direction of the Safety attribute. 
There is a collection of sports cars
(\mbox{Ferrari}, Maserati, Lamborghini, 
Porsche) lying far in the direction of 
the Performance arrow. 
Rolls-Royce and Mercedes-Benz lie in 
the directions of Innovation and Quality, 
and Smart, Hyundai, and Volkswagen load
heavily on Fuel Economy.

%%%%%%%%%%%%%%%%%%%%%%%%%%%%%%%%%%%%%%%%
\section{Discussion and challenges}
\label{sec:discussion}

In this section we put forward some 
points for debate, and discuss open
challenges for the development of theory 
and methods to deal with cellwise 
outliers.

%%%%%%%%%%%%%%%%%%%%%%%%%%%%%%%%%%%%%%%%
\subsection{Interpretation of outliers
            as cellwise or casewise}
\label{sec:interpr}

The notions of casewise and cellwise
outliers raise some questions of
interpretation.
On the one hand, the terms casewise 
contamination and cellwise 
contamination have a clear meaning as 
generative models, i.e. as recipes 
for different ways to create outliers. 
Many researchers have used these models
to set up simulation studies.

On the other hand, when faced with
incoming data, inferring whether an 
unusual row is a casewise outlier 
(i.e. generated by a different 
mechanism from the majority of the 
cases) or a mostly clean row with 
some outlying cells, is often an
unidentifiable problem. 
As an illustration, consider a toy 
example from \cite{cellHandler}. 
Assume the clean data is i.i.d. from
the 4-dimensional standard Gaussian 
distribution $F = N(\bzero, \bI)$
and consider the row $(10, 0, 0, 0)$.
From the casewise point of view, this 
case is ``equivalent'' to the row
$(\sqrt{50}, \sqrt{50}, 0, 0)$, and 
to the row $(5, 5, 5, 5)$, as both 
can be obtained by an orthogonal 
transformation of the data, under 
which the distribution 
$N(\bzero, \bI)$ remains the same.
But from the cellwise point of view, 
one would conclude that the first row
has one cellwise outlier, the second 
has 2 and the last has 4.
Therefore, the conclusions depend on
whether we adopt the casewise or the
cellwise paradigm.

While a data-based determination of
casewise versus cellwise outliers may 
not be feasible, it may be not 
necessary to make such a distinction 
from the viewpoint of estimation. In his 
Section 4.1, \cite{danilov2010robust} 
states that ``Detection methods for 
cellwise contamination can be used to
disarm casewise outliers if the 
covariance structure is known well enough 
to identify them. If we remove several 
cells from a casewise outlier so that the 
remaining sub-vector is not outlying any
more, its influence on the covariance 
estimate is significantly reduced and can 
be tolerated.'' 
This viewpoint seems sensible to us. 
But in order to efficiently disarm 
adversarial casewise outliers, one would 
need a rather accurate fit (covariance 
estimate) when looking for the most 
harmful cells. 
Pairwise correlation methods are likely 
to suffer substantially under adversarial 
casewise outliers, and thus may have to 
be combined by some method targeting 
casewise outliers as well, like 
2SGS. It does seem like the ``disarming'' 
approach can work.
When applying the cellwise MCD estimator
to all the benchmark datasets in the 
\texttt{robustbase} R package 
\citep{robustbase} that were 
previously analyzed by casewise
robust methods, the fits turned out
to be quite similar \citep{cellMCD}.

An alternative view is that, in the 
cellwise world, we can consider 
something a casewise outlier when too 
many cells are flagged or need to be 
changed in order to bring the case 
into the fold. 
This idea has been used as an 
algorithmic tool in the 2SGS method,
where cases get downweighted as a
whole in the GSE step, and in the
MacroPCA method.
As a way to interpret the results it
has also been implemented in \textbf{DDC},
which has a rule to flag rows that
contain much cellwise outlyingness.
There it is a kind of 
post-processing step. 
Nevertheless one should remain 
conscious of the fact that this 
viewpoint does not quite match the 
generative models of rowwise and 
cellwise outliers, unless one
assumes a model that generates both.
Several authors have indeed 
generated data with both types of
outliers combined.

%%%%%%%%%%%%%%%%%%%%%%%%%%%%%%%%%%%
\subsection{Insights from sparsity}

Robustness considerations have recently 
received increasing attention in the 
computer \mbox{science} and electrical
engineering communities. 
One particularly popular approach to
casewise robustness is to take an 
existing objective function and then to 
add a penalty term about a parameter 
vector which encodes outliers. 
The estimated parameter vector thereby
becomes sparse, corresponding to a low 
number of flagged outliers.
To the best of our knowledge, the first 
reference to this idea is 
\cite{sardy2001robust} who drew a 
parallel between using a Huber loss 
function in regression and penalizing 
a shift parameter $\bdelta = 
(\delta_1, \ldots, \delta_n)$ with an 
$L^1$ penalty. 
In particular, they showed that 
minimizing
\begin{equation}
	\sum_{i=1}^{n}\left(y_i - \bbeta 
	\bx_i - \delta_i \right)^2 + 
	\lambda \sum_{i=1}^{n}{|\delta_i|}
\end{equation}
over $\bbeta$ and $\bdelta$ gives the
same $\bhbeta$ as minimizing 
\begin{equation}
  \sum_{i=1}^{n}
	\rho_{\mbox{\tiny Huber}, \lambda}
	\left(y_i - \bbeta \bx_i \right)\;.
\end{equation}
This approach was also followed by 
\cite{mccann2007robust},
\cite{laska2009exact} and
\cite{li2013compressed}, and was further 
developed by \cite{she2011outlier} who 
formulated general connections with 
M-estimation in robust statistics. 
They showed that non-convex penalties 
on $\bdelta$ lead to more robust 
estimators, in the same way that  
M-estimators with non-convex loss 
functions can be more 
robust than M-estimators with 
convex loss functions. 
In spite of this fact, still 
more research gets published for 
convex than for non-convex penalties 
and loss functions, because convex 
functions are easier for proving 
theorems and writing algorithms.

A similar approach can be applied
to cellwise outliers. 
Instead of an $n$-variate vector 
of shifts $\bdelta$, one
can use an $n \times d$ matrix 
$\bDelta$ of shifts to describe
cellwise outliers, and penalize
$\bDelta$ to make it sparse.
In the context of low-rank matrix
completion, this was done by
\cite{zhou2010stable} and
\cite{candes2011robust}. 
The context is not the same as ours
because of two main reasons.
First, their approach is presented
for PCA but it is equivariant
to transposing the data matrix,
whereas in the robust statistical
literature on PCA there can be 
rowwise outliers, e.g. fully 
contaminated rows, but fully 
contaminated columns are not 
allowed because they would make 
the loadings unidentifiable.
A more fundamental difference is 
that their approach does not aim
for robustness against adversarial 
contamination but instead assumes 
that the contaminated cells occur 
at random positions according to
the uniform distribution, and
independently of each other and
the clean data.
This assumption is why they
have no need for a non-convex
penalty, in fact they use the
$L^1$ penalty.
Their method has received a lot of 
attention and performs well under
their assumptions if the outliers 
do not have large orthogonal 
distances to the true low-rank 
subspace 
\citep{she2016robust}. 

For regression, one could use an
analogous approach by the 
minimization
\begin{equation} \label{eq:Delta} 
  \argmin_{\bbeta,\bDelta} \left(
	\sum_{i=1}^{n}\left(y_i - \bbeta 
	(\bx_i - \bDelta_{i,\cdot})\right)^2
	+ \lambda\mathcal{P}(\bDelta)\right)
\end{equation}
where $\mathcal{P}(\cdot)$ denotes 
a penalty on the cells in $\bDelta$. 
\cite{zhu2011sparsity} have proposed 
a similar idea with the convex 
Frobenius norm as penalty, i.e. 
$\mathcal{P}(\cdot) = ||\cdot||_F$. 
This is designed to work when the 
matrix of shifts is dense but bounded, 
but less suitable for a sparse set of 
potentially gross cellwise outliers. 
\cite{chen2013robust} go further in
this direction.
Like \cite{she2011outlier} they 
conclude that approaches
based on convex optimization cannot
be robust to casewise or cellwise
outliers in the covariates.
Instead they work with trimmed
inner products.

For the estimation of a covariance 
matrix one also could try an 
approach similar to~\eqref{eq:Delta}.
It would be natural to minimize the
negative Gaussian loglikelihood of
the shifted data as in
\begin{equation}
  \argmin_{\bmu,\bSigma,\bDelta} 
	\left(
	\sum_{i=1}^{n}{\left(\log|\bSigma| 
	+ (\bx_i - \bDelta_{i,\cdot} 
	- \bmu)^{\btop} \bSigma^{-1}
	(\bx_i - \bDelta_{i,\cdot}-\bmu)
	\right)} +	
	\lambda \mathcal{P}(\bDelta)
	\right)
\end{equation}
where $\mathcal{P}(\cdot)$ again 
denotes a penalty on the elements 
of $\bDelta$. 
While this optimization can be 
carried out quite efficiently, 
we have found that it leads to a 
bias which shrinks the estimated 
covariance matrix. 
The cellwise MCD of \cite{cellMCD} is 
a further development of this 
approach.
Rather than penalizing a matrix 
$\bDelta$ of shifts, it penalizes 
an $n \times d$ matrix $\bW$ with 
zeroes and ones that indicate which 
cells to include in the estimation. 
The objective function then becomes 
the observed likelihood of the 
incomplete dataset formed by the
included cells. 
This avoids the bias in the 
estimated covariance matrix.

%%%%%%%%%%%%%%%%%%%%%%
\subsection{Models for cellwise outliers}

\cite{alqallaf2009} provided a
formalization of the idea of cellwise 
outliers, aimed at replacing the 
casewise Tukey-Huber contamination
model (THCM) of~\eqref{eq:thcm}. 
More precisely, they assumed that we 
observe a random variable generated by
\begin{equation}\label{eq:icm}
  \bX = (\bI - \bB) \bY + \bB\bZ 
\end{equation}
where $\bB$ is a diagonal matrix whose
diagonal elements can only take on the
values zero or one. To be precise, they
assume $\bB = \diag(B_1,\ldots,B_d)$ 
with Bernoulli distributed $B_j \sim 
\mbox{Bernoulli}(\varepsilon)$. 
\cite{alqallaf2009} refer to this 
as the 
{\it independent contamination model} 
(ICM) but one has to be a little 
careful with this name, because the
word {\it independent} does not refer 
to statistical independence, it merely
means that the diagonal elements of
$\bB$ are {\it allowed} to differ from 
each other. 
As \cite{alqallaf2009} point out, the 
THCM can be recovered from 
\eqref{eq:icm} by imposing the
equality $B_1 = \cdots = B_d$, and
then the $B_1, \ldots, B_d$ are not 
statistically independent.
At the other extreme is the situation
where $B_1, \ldots, B_d$ are, in fact,
statistically independent.
If on top of that we also assume that
$\bB$ is independent of $\bY$ and 
$\bZ$, we arrive at what they call
the \textit{fully independent 
contamination model} (FICM). 
Under the FICM each cell has a 
probability of $\varepsilon$ to 
be contaminated, and whether a cell 
gets contaminated does not 
depend on the value of the clean 
$\bY$ or the contamination $\bZ$.

Despite the fact that developing methods 
that work under the FICM model is already
hard, we would like to point out that
FICM is a restrictive model that could 
be extended substantially, as already
mentioned by \cite{alqallaf2009}. 
Recall that a random vector $\bX$ 
following the FICM can be written as
in~\eqref{eq:icm} where $\bY$ follows 
some clean distribution $F$ and $\bZ$ 
follows an unspecified distribution $H$. 
Importantly, $B_1,\ldots,B_d,\bY,\bZ$ 
are assumed independent. 
One could argue that this independence 
assumption is rather strong. 
Furthermore, in most empirical studies 
so far, the outlying cells have been 
generated with the additional 
assumption that also the components 
of $\bZ$ are independent of each 
other, and often even constant. 
This leads to cells that are marginally 
outlying, and can thus in principle be 
detected by looking at the marginal 
distributions of the variables.

Under the casewise THCM model 
of~\eqref{eq:thcm}
it is typically assumed that $B$ is 
independent of $\bY$, i.e. whether a case 
is sampled from the contaminated 
distribution does not depend on the 
value of the clean data. 
In general no assumptions on $H$ are made,
but in simulations $H$ sometimes depends 
on the {\it distribution} $F$ of the 
clean data. 
For instance, for covariance estimation, 
one often generates outlying cases in the 
direction of the last eigenvector of 
the true covariance matrix of $F$, which
is typically the most adversarial choice.

The FICM model could be extended in an
analogous way. As in THCM, we do not 
want $\bB$ to depend on $\bY$.
But the values of $\bZ$ could depend
on $\bB$ and on the distribution $F$ 
of the variable $\bY$.
This is what happens in the structural 
(adversarial) cellwise outlier 
generation in 
\cite{cellHandler,cellMCD}.
There the diagonal entries of $\bB$ 
are independently drawn from
$\mbox{Bernoulli}(\varepsilon)$.
Next, the contaminated cells are 
computed from the last eigenvector 
of the covariance matrix of the 
distribution of $\bY$ restricted to 
the variables with $B_j = 1$.
This uses the distribution of $\bY$,
but not the value of $\bY$ itself.

These dependencies can be 
characterized by the directed acyclic 
graph
\begin{center}
\includegraphics[width=0.5\columnwidth]
  {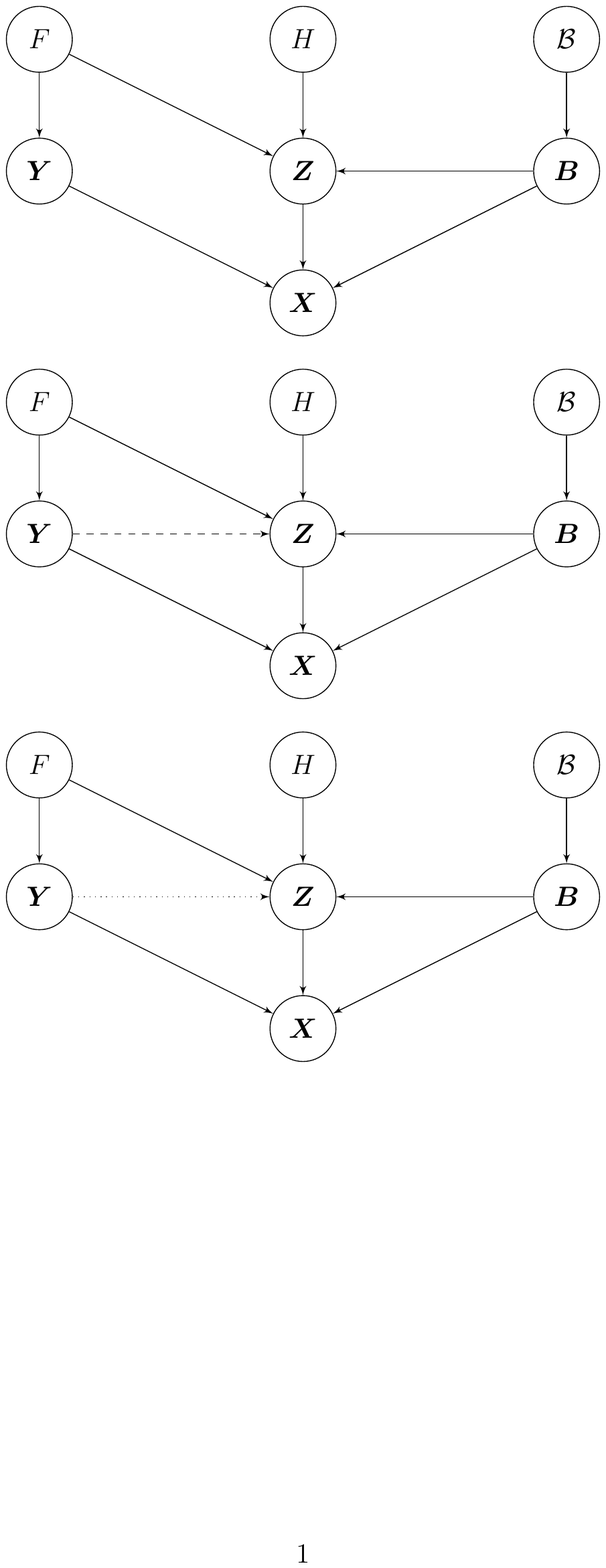}
\end{center}
where $\mathcal{B}$ is the multivariate
distribution of $\bB$. The final 
observed $\bX$ is at the bottom.
Note that there is an arrow from the
distribution $F$ to $\bZ$, but not from
the value $\bY$ to $\bZ$.
We feel that this model comes closest 
to the philosophy that has been 
dominant in the THCM setting.
This model allows for much more challenging 
cellwise outliers than have been 
predominantly considered so far. 
On the other hand, it is a valid question 
whether such outliers are actually common 
in practice. In order to answer this 
question we need appropriate methods to 
detect them.

%%%%%%%%%%%%%%%%%%%%%%%%%%%%%%%%%%%%%%%%%%%%%%
\subsection{Cellwise outliers and heavy tails}

We feel it is important to stress the 
difference between the contamination 
model~\eqref{eq:icm} and the concept 
of heavy tails. 
Heavy-tailed distributions are typically 
defined as probability distributions 
whose tails are not exponentially 
bounded, such as the Cauchy 
distribution.
Methods for heavy-tailed data have 
received a lot of attention recently, 
e.g. by \cite{fan2021shrinkage} 
who avoid the ubiquitous sub-Gaussian 
restriction. 
But note that in this context there 
is no notion of adversarial 
contamination or structured outliers.

While heavy-tailed distributions may 
often generate samples that appear 
as though they could have 
been generated from~\eqref{eq:icm}, 
there is a crucial difference. 
If the marginal distributions of an 
observed random vector follow a 
heavy-tailed distribution, the 
cells that seem outlying still carry 
information about the underlying 
distribution. 
In contrast, under the 
model~\eqref{eq:icm} we don't want 
to assume anything about the 
distribution of $\bZ$, so we cannot 
use $\bZ$ to obtain information on 
the characteristics of the 
distribution of $\bY$, which is the 
one of interest.
Therefore cells generated by $\bZ$
do not carry information on $\bY$.
In the cellwise contamination model
we should thus strive to eliminate 
the effect of $\bZ$. 
If we were told which cells were 
generated from $\bZ$, then our 
best bet would be to discard them 
and to continue with the 
remaining cells. 

As a simple example, consider a 
univariate dataset with a single 
far out value. 
If we knew for a fact that the data 
came from a Cauchy distribution, 
then we should keep this unusual 
value and calculate the MLE for the 
Cauchy distribution. 
If, instead, we knew that the data 
came from a Gaussian distribution 
except for the suspicious value that 
came from some other distribution, 
we should discard it and perform 
the Gaussian MLE on the remaining 
values.

In spite of the two different models 
and objectives, there may still be merit 
in combining ideas from both worlds. 
In general, we expect that estimators 
for heavy-tailed data can still perform 
reasonably well under cellwise 
contamination. 
This may at least give them a role as 
initial estimators for iterative 
algorithms intended for data that
may contain cellwise outliers.

%%%%%%%%%%%%%%%%%%%%%%%%%%%%%%%%%%%%%%%%
\section{Conclusion} \label{sec:conc}

Cellwise outliers provide a fascinating 
but tough challenge for statistical 
data analysis. 
Here we have reviewed some methodology 
that has been developed for detecting 
cellwise outliers and estimating location 
and covariance, and for carrying out a 
number of other tasks such as regression 
and PCA. 
From Tables~\ref{tab:flagging}
to~\ref{tab:pca} we conclude that
computation time is not an 
obstacle to the use of cellwise
robust methods.
We showed that the cellwise breakdown 
values of estimators of location, 
covariance and regression are rather 
low when holding on to some intuitive 
notions, meaning that one has to search 
in new directions.
We next proposed a cellwise robust 
approach for correspondence analysis.
Finally, we discussed various issues
that serve as food for thought and may 
inspire debate and further research.\\

\noindent{\bf Software availability.}
The code for the cellwise robust 
correspondence analysis method described 
in section~\ref{sec:CA} is available in 
the \textsf{R} package \texttt{cellWise} 
\citep{R:cellWise}, with a vignette
\texttt{Correspondence\_analysis\_examples}
reproducing Figures~\ref{fig:ddc}
to~\ref{fig:brands_biplot}.\\

\noindent {\bf Acknowledgment.}
Thanks go to Mia Hubert for interesting
discussions.

\section*{Appendix}

Here we give the proofs of the breakdown
results in section~\ref{sec:bdv}.

\begin{proof}[Proof of Proposition 
              \ref{prop:loc}]
Consider a hyperplane $H$ given by
the equation $\bx^{\btop} \bone_d = c$ 
where $\bone_d$ is the $d \times 1$ 
vector of ones, and $c$ is chosen large 
enough so that all data points $\bx_i$
satisfy $\bx_i^{\btop} \bone_d < c$.
So $\bX$ lies on one side of $H$,
and $H$ is not parallel to any 
coordinate axis.
In each row of $\bX$ we can then
replace a single cell such that all of
the resulting points lie on $H$, 
yielding the contaminated dataset 
$\bX^m$\,.
We can do this by replacing no more than
$\lceil n/d \rceil$ cells in each 
variable, which is a fraction
$\lceil n/d \rceil/n$ of its $n$ cells.
From property~\eqref{eq:locEFP} 
it then follows that $\bhmu(\bX^m)$
lies on $H$ too. Since we can choose
the value $c$ arbitrarily large, 
$\bhmu(\bX^m)$ breaks down.
\end{proof}

\begin{proof}[Proof of Corollary
              \ref{prop:loc2}]
Consider a dataset 
$\{\bx_i;\; i=1,\ldots,n\}$ 
that lies on a hyperplane $H$. Then
the $\bx'_i = \bx_i - \bx_1$ lie 
on the shifted hyperplane $H_0$
which passes through the origin.
Now take an orthogonal matrix $\bU$
which maps $H_0$ to the hyperplane 
spanned by the first $d$ orthonormal
basis vectors.
This maps the data points to
$\btx_i = \bU\bx'_i$.
Now take the orthogonal matrix
$\bV = \mbox{diag}(1,\ldots,1,-1)$.
By orthogonal equivariance we then
obtain $\bhmu(\{\bV\btx_i\}) =
\bV(\bhmu(\{\btx_i\})).$
But $\{\bV\btx_i\} = \{\btx_i\}$
hence $\bhmu(\{\btx_i\}) = 
\bV(\bhmu(\{\btx_i\}))$ so its last
coordinate must be zero.
Therefore $\bhmu(\{\bx_i\}) =
\bU^{\btop}\bhmu(\{\btx_i\}) + \bx_1$
lies on $H$, which 
proves~\eqref{eq:locEFP}.  
\end{proof}

\begin{proof}[Proof of Proposition 
              \ref{prop:impl}]
When $n \leqslant d$ all points of $\bX$
lie on a lower-dimensional affine 
subspace, so~\eqref{eq:covEFP} implies that
$\lambda_d(\bhSigma(\bX)) = 0$ so
$m=0$ already suffices for breakdown.
From here on we can assume $n > d$.
Pick one row of the dataset, and
consider an affine hyperplane that is
not parallel to any coordinate axis.
In the remaining $n-1$ rows we can then
replace a single cell such that all of
the resulting points lie on the same 
hyperplane, yielding the contaminated
dataset $\bX^m$\,.
We can do this by replacing no more than
$\lceil (n-1)/d \rceil$ cells in each 
variable, which is a fraction
$\lceil (n-1)/d \rceil/n$ of its $n$ 
cells. From the exact fit 
property~\eqref{eq:covEFP} 
it then follows that 
$\lambda_d(\bhSigma(\bX^m))=0$, so we
obtain implosion.
\end{proof}

\begin{proof}[Proof of Corollary
              \ref{prop:impl2}]
Consider a dataset $\btX$ in a 
lower-dimensional subspace.
Let us shift and rotate $\btX$ so that
its image $\bY$ lies in the hyperplane 
through the origin $\bzero$ spanned by 
the first $d-1$ orthonormal basis 
vectors.
By affine equivariance of $\bhSigma$ 
it follows that $\bC := \bhSigma(\bY)$ 
has the same eigenvalues as 
$\bhSigma(\btX)$.
We now split up $C$ in blocks of sizes
$d-1$ and 1, and consider the following
matrix $\bA$:
$$\begin{array}{ll}
  \bC=\begin{bmatrix}
  \bC_{11} & \bC_{21}^{\btop}\\
  \bC_{21} & \bC_{22}
\end{bmatrix}
\;\;\mbox{ and }
& \bA=\begin{bmatrix}
  \bI_{d-1} & \bzero\\
  \bzero & 2
  \end{bmatrix}\;.
  \end{array} $$
Now linearly transform $\bY$ to 
$\btY = \bY \bA$. By the affine 
equivariance of $\bhSigma$ we obtain
$$\bhSigma(\widetilde{\bY}) = 
  \bA^{\btop} \bC \bA =
  \begin{bmatrix}
  \bC_{11} & 2\bC_{21}^{\btop}\\
  2\bC_{21} & 4\bC_{22}
\end{bmatrix}\;.$$
But the latter matrix must equal $\bC$ 
since by construction $\btY = \bY$.
Therefore $2\bC_{21} = \bC_{21}$ and 
$4\bC_{22} = \bC_{22}$ which is only 
possible when $\bC_{21} = \bzero$ and 
$\bC_{22} = 0$, so $\bC$ is singular.
\end{proof}

\begin{proof}[Proof of Proposition 
              \ref{prop:reg}]
Take the first case $\bz_1$ of the 
dataset, and consider any nonzero number 
$\beta_0$\,. Then compute $\alpha_0 =
y_1 - \beta_0 x_{11} - \ldots -
\beta_0 x_{1p}$\,.
Combine them in the vector $\bgamma_0 =
(\alpha_0, \beta_0, \ldots, 
\beta_0)^{\btop}$.
Case $1$ then lies on the hyperplane
$H$ with equation 
$y = \bx^{\btop} \bgamma_0$\;.
Note that $H$ is not parallel to any
of the coordinate axes.
In the remaining $n-1$ cases we can then
replace a single cell of $\bZ$ such that 
all of the resulting points lie on $H$, 
yielding the contaminated
dataset $\bZ^m$\,.
We can do this by replacing no more than
$\lceil (n-1)/(p+1) \rceil$ cells in each 
variable (including the response), which 
is a fraction 
$\lceil (n-1)/(p+1) \rceil/n$ 
of its $n$ cells. From~\eqref{eq:regEFP} 
it then follows that 
$\bhgamma(\bZ^m) = \bgamma_0$.
Since $\beta_0$ can be chosen arbitrarily
large, $\bhgamma(\bZ^m)$ breaks down.
\end{proof}

\begin{proof}[Proof of Corollary
              \ref{prop:reg2}]
Consider a dataset $(\bX,\by)$ for which 
$y_i = \bx_i^{\btop}\bgamma_0$ for some 
$\bgamma_0$\;, for all $i = 1,\ldots,n$.
Construct $\bty = \by - \bX\bgamma_0 =
\bzero$.
By regression equivariance,
$\bhgamma(\bX,\bty) = \bhgamma(\bX,\by)
- \bgamma_0$\;.
Using the fact that $\bty = 0\bty$
and scale equivariance with $c=0$ 
we get $\bhgamma(\bX,\bty) =
\bhgamma(\bX,0\bty) =
0\bhgamma(\bX,\bty) = \bzero$.
Therefore 
$\bhgamma(\bX,\by) = 
\bhgamma(\bX,\bty) + \bgamma_0 =
\bgamma_0$ which
proves~\eqref{eq:regEFP}.
\end{proof}

%\bibliographystyle{elsarticle-harv}
%\bibliography{challenges_cellwise}

\end{document}